\documentclass[11pt,letterpaper]{article}

\usepackage[T1]{fontenc}
\usepackage{lmodern}
\usepackage{microtype}
\usepackage[letterpaper,margin=1in]{geometry}
\usepackage{amsmath,amssymb,amsthm,mathtools}
\usepackage{aliascnt}
\usepackage{enumitem}
\usepackage{tabularx,environ,caption}
\usepackage[numbers,sort&compress]{natbib}
\usepackage[colorlinks=true,linkcolor=blue,citecolor=blue,urlcolor=blue]{hyperref}
\usepackage[nameinlink,noabbrev,capitalise]{cleveref}

\usepackage{todonotes}

\makeatletter
\newcolumntype{\expand}{}
\long\@namedef{NC@rewrite@\string\expand}{\expandafter\NC@find}

\NewEnviron{problem}[2][]{
  \def\problem@arg{#1}
  \def\problem@framed{framed}
  \def\problem@lined{lined}
  \def\problem@doublelined{doublelined}
  \ifx\problem@arg\@empty
    \def\problem@hline{}
  \else
    \ifx\problem@arg\problem@doublelined
      \def\problem@hline{\hline\hline}
    \else
      \def\problem@hline{\hline}
    \fi
  \fi
  \ifx\problem@arg\problem@framed
    \def\problem@tablelayout{|>{\itshape}lX|c}
    \def\problem@title{\multicolumn{2}{|l|}{
        \raisebox{-\fboxsep}{\textsc{#2}}
      }}
  \else
    \def\problem@tablelayout{>{\itshape}lXc}
    \def\problem@title{\multicolumn{2}{l}{
        \raisebox{-\fboxsep}{\textsc{\large #2}}
      }}
  \fi
  \par\noindent
  
  \begin{center}
  \begin{tabularx}{0.9\columnwidth}{\expand\problem@tablelayout}
    \problem@hline
    \problem@title\\[2\fboxsep]
    \BODY\\\problem@hline
  \end{tabularx}
  \end{center}
  \par
}
\makeatother

\setlist[enumerate]{topsep=.4em,itemsep=.1em,parsep=0pt,partopsep=0pt}

\newtheorem{theorem}{Theorem}[section]
\newaliascnt{lemma}{theorem}
\newtheorem{lemma}[lemma]{Lemma}
\aliascntresetthe{lemma}
\newaliascnt{proposition}{theorem}
\newtheorem{proposition}[proposition]{Proposition}
\aliascntresetthe{proposition}
\newaliascnt{corollary}{theorem}
\newtheorem{corollary}[corollary]{Corollary}
\aliascntresetthe{corollary}
\theoremstyle{definition}
\newaliascnt{definition}{theorem}

\aliascntresetthe{definition}
\newaliascnt{example}{theorem}

\aliascntresetthe{example}
\theoremstyle{remark}
\newaliascnt{remark}{theorem}
\newtheorem{remark}[remark]{Remark}
\aliascntresetthe{remark}

\usepackage{thm-restate}

\crefname{theorem}{theorem}{theorems}
\crefname{lemma}{lemma}{lemmas}
\crefname{proposition}{proposition}{propositions}
\crefname{corollary}{corollary}{corollaries}
\crefname{definition}{definition}{definitions}
\crefname{example}{example}{examples}
\crefname{remark}{remark}{remarks}

\newcommand{\R}{\mathbb R}
\newcommand{\cH}{\mathcal H}
\newcommand{\cA}{\mathcal A}
\newcommand{\Hom}{\mathsf{Hom}}
\newcommand{\pHom}{p\text{-}\mathsf{Hom}}
\newcommand{\ghw}{\mathsf{ghw}}
\newcommand{\fhw}{\mathsf{fhw}}
\newcommand{\adw}{\mathsf{adw}}
\newcommand{\subw}{\mathsf{subw}}
\newcommand{\sharpsubw}{\mathsf{\#subw}}
\newcommand{\fsep}{\mathsf{fsep}}
\newcommand{\FPT}{\mathsf{FPT}}
\newcommand{\PTIME}{\mathsf{PTIME}}
\newcommand{\NP}{\mathsf{NP}}
\newcommand{\QP}{\mathsf{QP}}
\newcommand{\rank}{\operatorname{rank}}
\newcommand{\supp}{\operatorname{supp}}

\newif\ifaidisclosure
\aidisclosuretrue

\title{FPT $=$ PTIME for Homomorphism Problems\\
on Sparse-Incidence and Bounded-Independence Patterns}
\author{Matthias Lanzinger\\
Institute of Logic and Computation, TU Wien}
\date{}

\providecommand{\adw}{\operatorname{adw}}
\providecommand{\subw}{\operatorname{subw}}
\providecommand{\fhw}{\operatorname{fhw}}
\providecommand{\FPT}{\mathsf{FPT}}
\providecommand{\PTIME}{\mathsf{PTIME}}

\makeatletter
\@ifundefined{theorem}{\newtheorem{theorem}{Theorem}[section]}{}
\@ifundefined{corollary}{\newtheorem{corollary}[theorem]{Corollary}}{}
\@ifundefined{lemma}{\newtheorem{lemma}[theorem]{Lemma}}{}
\@ifundefined{proposition}{\newtheorem{proposition}[theorem]{Proposition}}{}
\@ifundefined{remark}{}{}
\makeatother

\begin{document}

\pagenumbering{arabic}
\maketitle

\begin{abstract}
Assuming the Exponential Time Hypothesis (ETH), fixed-parameter tractability and polynomial-time solvability coincide for homomorphism problems specified by classes of pattern hypergraphs of bounded incidence degeneracy or bounded primal independence number. In both cases, tractability is characterised by bounded fractional hypertree width. Grohe (JACM 2007) established the corresponding FPT--PTIME equivalence under bounded arity. Our result allows unbounded arity and covers important cases such as bounded-degree patterns and patterns whose incidence graphs exclude a fixed minor.

Building on the recent fractional balanced-separator framework and rounding theorem
of Korchemna et al. (FOCS 2024), we prove a near-linear bound on fractional
hypertree width ($\fhw$) in terms of adaptive width ($\adw$). For every
hypergraph $H$ with $\adw(H)\ge2$,
\[
\fhw(H)=O\!\left(\lambda(H)\adw(H)\log\adw(H)\right),
\]
where $\lambda(H)=\min\{\mu(H),\max\{1,\log\alpha(H)\}\}$, with
$\mu(H)$ denoting incidence degeneracy and $\alpha(H)$ the independence
number of the primal graph. 

As a further consequence, we obtain a corresponding FPT--PTIME collapse for
exact homomorphism counting on every bounded-$\lambda$ class. More generally,
for every recursively enumerable class of pattern hypergraphs,
fixed-parameter tractability of the parameterised homomorphism problem
implies quasipolynomial-time solvability of the corresponding unparameterised
problem, assuming ETH.
\end{abstract}

\section{Introduction}
\label{sec:introduction}

We study the homomorphism problem between relational structures. For relational structures $A,B$ over the same vocabulary, a homomorphism from $A$ to $B$ is a map from the domain of $A$ to the domain of $B$ that preserves every relation. We write $A \to B$ when such a map exists.
This is known to generalise a wide range of common algorithmic problems. For instance, on graph structures, if $B$ is fixed to be the $k$-clique $K_k$, then deciding $A \to K_k$ is equivalent to $k$-colourability of $A$. On the other hand, fixing $A=K_k$, we have that $K_k \to B$ is equivalent to the $k$-clique problem, and $K_k \to \bar B$ is the $k$-independent set problem.

In general, the problem is NP-complete. As the examples already suggest, it is of major interest to understand under which restrictions on the left-hand side (the \emph{patterns}), or on the right-hand side (the \emph{hosts} or \emph{templates}), the problem is tractable. To simplify further discussion, we refer to the setting in which the templates belong to a fixed class as the constraint satisfaction problem (CSP).
The complexity of restricting either side has been studied extensively, and for CSP this has culminated in a dichotomy theorem that precisely characterises for which classes of templates there are tractable algorithms, and for which the problem is NP-hard~\cite{Bulatov2017,Zhuk2020}.
Restricting the left-hand side is an entirely different problem. For instance, Bodirsky and
Grohe~\cite{BodirskyGrohe2008} showed that no analogous dichotomy exists for
arbitrary pattern classes. In this paper we study restrictions to the left-hand sides. Formally, for a class of hypergraphs $\cH$ and $H_A$ denoting the hypergraph structure of $A$:

\begin{problem}{$\Hom(\cH)$}
Input & Finite relational structures $A$ and $B$ 
with $H_A\in\cH$.\\
Question & Does $A\to B$?
\end{problem}

We write $\pHom(\cH)$ for the parameterisation of $\Hom(\cH)$ by the
pattern $A$. In the database literature and in logic, $\Hom(\cH)$ corresponds to
the Boolean conjunctive query problem, with $A$ being the query and $B$ the
database on which it is evaluated~\cite{ChandraMerlin77}.

While there is no $\PTIME$ vs. $\NP$-complete dichotomy for $\Hom(\cH)$, there are still results on the limits of tractability. In a landmark result Grohe~\cite{DBLP:journals/jacm/Grohe07} proved that, for every
recursively enumerable class bounded arity structures, the problem is fixed-parameter tactable exactly when it is feasible in polynomail time, assuming
$\FPT\ne\mathsf{W}[1]$.

Grohe's theorem leaves open the case of unbounded arity, where treewidth
no longer captures tractability. A structure consisting of a single tuple
with $r$ distinct entries has treewidth $r-1$, although its homomorphism
problem is easy.

This motivated more ``hypergraph-native'' variants of treewidth, beginning
with hypertree width and generalised hypertree
width ($\ghw$)~\cite{DBLP:journals/jcss/GottlobLS02}, and then fractional
hypertree width ($\fhw$)~\cite{fhw}.
These notions are still based on tree decompositions, but introduce more complex cost functions for bags (in contrast to cardinality of the bag for treewidth). In particular, fractional hypertree width at most $k$ means that there is a tree decomposition such that every bag has fractional edge-cover number at most $k$ (formal definitions follow in \Cref{sec:preliminaries}). Analogously, generalised hypertree width ($\ghw$) uses integral edge covers instead.
Bounded fractional hypertree width
remains the most general known sufficient condition on the pattern
side for $\Hom(\cH)\in\PTIME$~\cite{fhw}.

The tractability boundary extends further in the parameterised setting.
Marx introduced \emph{adaptive width} ($\adw$) and \emph{submodular width} ($\subw$)~\cite{adw,Marx13}.
Treewidth and the hypertree widths have a single decomposition as a witness, with a small cardinality or cover in every bag. Adaptive and submodular width instead
choose a decomposition separately for each cost function. This additional layer of quantification over sets of functions makes connecting the two kinds of width notions technically challenging.
Remarkably, assuming ETH, $\pHom(\cH)$ is in $\FPT$ exactly when $\cH$ has bounded submodular width, or equivalently bounded adaptive width~\cite{Marx13}.

The standard inequalities are~\cite{adw,Marx13,fhw}
\[
 \adw(H)\le\subw(H)\le\fhw(H)\le\ghw(H).
\]
The reverse direction is much less understood. The additional quantification over functions, and the possibility of a different tree decomposition witness for each function,
make adaptive and submodular width difficult to technically connect to $\fhw$.
Moreover, there are
classes of bounded adaptive width and unbounded fractional hypertree
width~\cite{adw}, so a reverse bound requires additional structure.

Few such comparisons are known. For hypergraphs of maximum vertex degree
two, where each element occurs in at most two tuples,
Lanzinger~\cite{LanzingerPODS22} showed under ETH that bounded fractional
hypertree width and bounded submodular width coincide for structures with degree 2, i.e., every domain element occurs in at most 2 tuples. This conclusion follows through the complexity characterisations and gives
no explicit quantitative bound between the widths. More recently, Lanzinger~\cite{Lanzinger2026Cuts} gives quantitative comparisons in the form $\ghw(H) = O(\subw(H)\log \subw(H))$
under two newly introduced structural conditions \emph{bounded edge excess} and \emph{private intersections}. For a detailed discussion of the connections to our main results, see~\Cref{sec:related-work}. Furthermore, even small values of
submodular width on graphs have required substantial analysis. A recent
example is the work of Bringmann and Gorbachev~\cite{BringmannGorbachev2025}, which classifies graph patterns of subquadratic complexity and relates their optimal exponents to submodular width.

Our main theorem gives a bound that closely couples $\fhw$ and $\adw$, connecting these two different branches of hypergraph width measures. The two are linked via a factor $\lambda(H)$ that is the minimum of the incidence degeneracy and the truncated logarithm of the independence number of the primal graph.
\begin{restatable}{theorem}{mainwidththeorem}
\label{thm:intro-width}
There is a universal constant $C>0$ such that every finite hypergraph $H$
with $\adw(H)\ge2$ satisfies
\[
 \fhw(H)\le C\lambda(H)\adw(H)\log\adw(H).
\]
When $\adw(H)<2$, $\fhw(H)\le C\lambda(H)$.
\end{restatable}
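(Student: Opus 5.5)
The plan is to go through fractional balanced separators and the rounding theorem of Korchemna et al.\ (FOCS 2024). That theorem says, roughly, that if every weight function on $V(H)$ (or every fractional edge cover of a set) admits a balanced separator of small fractional cover, then $\fhw(H)$ is bounded by the separator cost times a logarithmic factor. Their bound has the form $\fhw(H)=O(k\log k)$ when every set $W$ with $\rho^*(W)$ large enough has a balanced separator $S$ with $\rho^*(S)\le k$. The proof therefore has two steps.
\begin{enumerate}
\item Show that bounded $\adw$ yields small fractional balanced separators, with a loss of at most $O(\lambda(H))$.
\item Feed this into the Korchemna et al.\ framework to get $\fhw(H)=O(\lambda(H)\,\adw(H)\log\adw(H))$.
\end{enumerate}
The case $\adw(H)<2$ is handled by the same argument with $k$ replaced by a constant.

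For the first step, fix $W\subseteq V(H)$ together with an optimal fractional edge cover $\gamma$ of $W$. From $\gamma$, build a fractional independent set $\mu$, i.e.\ an edge-dominated function, by spreading the mass $\rho^*(W)$ over $W$. Then apply the definition of adaptive width to the modular function $w_\mu(X)=\sum_{v\in X}\mu(v)$. This gives a tree decomposition in which every bag has $\mu$-weight at most $\adw(H)$. A standard centroid-bag argument then produces a bag $B$ that is a balanced separator of $W$ with respect to $\mu$, and $\mu(B)\le\adw(H)$.

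What this does not yet give is a bound on $\rho^*(B\cap W)$, which is the quantity the separator theorem needs. This gap is exactly where $\lambda(H)$ enters: we need an integrality-gap-type inequality $\rho^*(X)\le O(\lambda(H))\cdot\max_\mu \mu(X)$ for all $X$, where the maximum ranges over fractional independent sets restricted to $X$. By LP duality, $\max_\mu\mu(X)=\rho^*(X)$ for the hypergraph induced on $X$. The difficulty is that adaptive width chooses a different decomposition for each $\mu$, so we cannot use all $\mu$ at once. Instead, I would iterate: whenever the separator bag $B$ has $\rho^*(B)$ much larger than $\mu(B)$, reweight $\mu$ toward $B$, using multiplicative weights, and re-separate. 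The number of rounds, and hence the loss factor, is controlled by $\lambda$. There are two bounds to establish.
\begin{itemize}
\item \emph{Incidence degeneracy.} The fractional cover number and the fractional independence number of any sub-hypergraph differ from their integral counterparts by at most a factor $O(\mu(H))$. Greedy elimination along a degeneracy ordering of the incidence graph builds an integral independent set whose size is comparable to $\rho^*$.
\item \emph{Primal independence number.} The integrality gap of set cover is $O(\log\alpha)$, since any set $X$ has an integral cover of size at most $\alpha(H[X])$ and the standard LP-rounding gives $\rho(X)\le O(1+\log\alpha)\rho^*(X)$. I would expect the actual argument to route through integral independent sets, which are valid $\mu$'s for adaptive width.
\end{itemize}

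For the second step, having shown that every $W$ has a $\tfrac12$-balanced separator $S$ with $\rho^*(S)\le C'\lambda(H)\adw(H)$, I would invoke the rounding/decomposition theorem to conclude $\fhw(H)\le C\lambda(H)\adw(H)\log\adw(H)$. Care is needed that the logarithm is in $\adw$ and not in $\lambda\adw$. If the theorem yields $\log(\lambda k)$, I would apply the separator argument at the level of independent sets instead, where the parameter is $\adw$ alone, and multiply by $\lambda$ only at the end when converting to $\rho^*$.

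I expect the main obstacle to be the first step, specifically converting balance with respect to one fractional independent set into a bound on the fractional cover of the separator, uniformly over all $W$, while losing only a factor of $\lambda$.
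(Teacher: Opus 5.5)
There is a genuine gap, and it sits exactly at the step you flag as the main obstacle. You fix one modular weighting $\mu$ (the LP dual of the cover of $W$), take a centroid bag $B$ of the decomposition that $\adw$ provides for $\mu$, and note $\mu(B)\le\adw(H)$. This says nothing about $\rho^*(B)$, and your remedy does not close the gap.

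\begin{itemize}
\item The inequality $\rho^*(X)\le O(\lambda(H))\max_\mu\mu(X)$ holds with factor $1$ by LP duality. Integrality gaps between $\rho$ and $\rho^*$ are irrelevant, since everything is measured in $\rho^*$. Incidentally, it is false that every $X$ has an integral cover of size at most $\alpha(H[X])$: $K_n$ has $\alpha=1$.
\item The real obstruction is the quantifier order. A multiplicative-weights loop that reweights $\mu$ and re-separates only certifies a bound for the \emph{average} $\bar x=\frac1T\sum_i\mathbf 1_{B_i}$ of the bags it produces, never for a single bag. Its number of rounds depends on the target accuracy, not on $\lambda$.
\end{itemize}

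The missing idea is to make such averages first-class objects. Relax separators to vertex demands $x\in[0,1]^{V(J)}$ that satisfy the truncated-distance balance condition of Korchemna et al.\ with respect to a nonzero edge weighting $\gamma$. The truncated distance is a minimum of affine functions of $x$, so this feasible set is compact and convex. A minimax theorem then yields one $b\in\mathcal M_J$ with $\sum_v b(\{v\})x(v)\ge\beta_{J,1/2}(\gamma)$ for every fractional $(\gamma,1/2)$-balanced $x$. Apply $\adw$ to this single $b$. Every tree decomposition has a $(\gamma,1/2)$-balanced centroid bag $B_t$ whose indicator is feasible, so $\beta_{J,1/2}(\gamma)\le b(B_t)\le\adw(H)$. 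This holds for every induced subhypergraph $J$ and every $\gamma$, with no iteration.

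Your placement of $\lambda$ and the logarithm is also wrong, which is why your intermediate claim cannot hold as stated. Both factors come from one step: the Korchemna et al.\ rounding theorem. It turns a fractional $(\gamma,1/2)$-balanced separator of cost $r$ into an integral $(\gamma,5/6)$-balanced separator of cost $O(\lambda(J)\,r\log\max\{2,r\})$. Passing from integral separators to a decomposition loses only a constant. If every $H[W]$ and every nonzero $\gamma$ admit an integral $(\gamma,\eta)$-balanced separator of cost at most $r$, the recursive construction gives $\fhw(H)\le\frac{2(2-\eta)}{1-\eta}r$. So the integral half-balanced separators of cost $O(\lambda(H)\adw(H))$ that your first step promises would already give $\fhw(H)=O(\lambda(H)\adw(H))$ without the logarithm. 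That statement is currently open.

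Finally, the recursion needs balance with respect to edge weightings, in particular an optimal fractional cover $\gamma$ of the current boundary $Z$ (your $W$). The edges meeting a component $C$ of $J-S$ then cover $Z\cap C$, which gives $\rho^*(Z\cap C)\le\eta\rho^*(Z)$. Balance with respect to the dual vertex weights $\mu$ only gives $\mu(Z\cap C)\le\frac12\mu(Z)$, and that does not bound $\rho^*(Z\cap C)$.
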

On a technical level we build on recent work of Korchemna et al.~\cite{KorchemnaEtAl2024}. They introduce a framework of fractional balanced separators and an accompanying rounding theorem to convert fractional separators into integral separators in order to compute approximations of $\fhw$ efficiently.
We instead observe that the space of fractional balanced separators is compact and convex. Through a minimax argument we then bound fractional balanced separators in terms of adaptive width. From this we are eventually able to show constructively that a tree decomposition adhering to the bound in \Cref{thm:intro-width} always exists.

\Cref{thm:intro-width} supplies the missing quantitative link between two
well-developed theories. On classes of bounded $\lambda$, fractional
hypertree width is within a logarithmic factor of adaptive width. For
arbitrary hypergraphs, the bound
$\lambda(H)\le O(\log|V(H)|)$ still gives a polylogarithmic comparison
when adaptive width is bounded.

This link allows us to combine classical tractability results based on
fractional hypertree width with parameterised complexity results
characterised by adaptive and submodular width. It also transfers bounds
across the intermediate polymatroid width measures. As a result, we
obtain the following consequences by combining results from the
substantial bodies of work on the two sides of this comparison.

\paragraph{1) $\FPT=\PTIME$ for bounded-$\lambda$ hypergraph classes} Together with the width hierarchy, the theorem shows that bounded fractional hypertree width, bounded submodular width, and bounded adaptive width are equivalent up to a $\log \adw$ factor on every class with bounded $\lambda$.
Since bounded $\subw$ is equivalent to the existence of an $\FPT$ algorithm and bounded $\fhw$ implies a polynomial-time algorithm, the two boundaries collapse.

\begin{theorem}
\label{thm:intro-collapse}
Let $\cH$ be a recursively enumerable hypergraph class of bounded
$\lambda$. Assuming ETH,
\[
 \pHom(\cH)\in\FPT
 \quad\Longleftrightarrow\quad
 \Hom(\cH)\in\PTIME.
\]
These conditions hold exactly when $\cH$ has bounded fractional hypertree
width.
\end{theorem}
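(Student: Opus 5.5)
The plan is to close a cycle of implications, using \Cref{thm:intro-width} as the only new ingredient and otherwise combining known characterisations. Fix a recursively enumerable class $\cH$ with $\lambda(H)\le L$ for all $H\in\cH$. I will show three implications: bounded $\fhw$ implies $\Hom(\cH)\in\PTIME$, which implies $\pHom(\cH)\in\FPT$, which implies bounded $\fhw$. The last implication uses ETH.

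\textbf{Step 1: bounded $\fhw$ gives polynomial time.} Suppose $\fhw(H)\le k$ for all $H\in\cH$. By the result of Grohe and Marx~\cite{fhw}, $A\to B$ can then be decided in polynomial time. One point needs care: we are not handed a decomposition. I would cite the known polynomial-time algorithm that computes a fractional hypertree decomposition of width bounded by a function of $k$, such as Marx's cubic approximation, and then run the standard join-based evaluation over that decomposition. Its running time is $\|B\|^{O(f(k))}$ times a polynomial in $\|A\|$.

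\textbf{Step 2: polynomial time gives $\FPT$.} This is immediate. A polynomial-time algorithm is in particular an $\FPT$ algorithm for the parameterisation by $A$.

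\textbf{Step 3: $\FPT$ gives bounded $\fhw$.} Assume ETH and $\pHom(\cH)\in\FPT$.
\begin{itemize}
\item By Marx's characterisation~\cite{Marx13}, which applies to recursively enumerable classes under ETH, $\cH$ has bounded submodular width. Hence $\adw(H)\le\subw(H)\le w$ for some constant $w$ and all $H\in\cH$.
\item Apply \Cref{thm:intro-width} to each $H\in\cH$. If $\adw(H)\ge2$, then $\fhw(H)\le C L w\log w$. If $\adw(H)<2$, then $\fhw(H)\le CL$.
\item So $\fhw$ is bounded by $C L\max\{1,w\log w\}$ on all of $\cH$.
\end{itemize}
This closes the cycle and also gives the final sentence of the theorem: all three conditions are equivalent to bounded $\fhw$.

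\textbf{Main obstacle.} The only delicate point is matching the hypotheses of Marx's lower bound to our setting. His theorem is stated for recursively enumerable classes of hypergraphs, with $\Hom(\cH)$ defined via the hypergraph $H_A$ of the pattern. I would check that our problem and parameterisation coincide with his: parameterising by $A$ versus by $H_A$ makes no difference, since $A$ determines $H_A$ and the relevant sizes are bounded in terms of each other. I would also check that his lower bound is unaffected by unbounded arity, since his result is formulated for arbitrary hypergraphs. Everything else is bookkeeping of constants.
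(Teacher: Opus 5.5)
Your proposal is correct and is essentially the paper's proof. The paper closes the same cycle: bounded $\fhw$ gives $\PTIME$ via Grohe--Marx, $\PTIME$ gives $\FPT$ trivially, and $\FPT$ gives bounded $\subw$ (hence bounded $\adw$) via Marx's ETH lower bound, after which \Cref{thm:intro-width} with bounded $\lambda$ returns bounded $\fhw$. One small correction to your side remark: $\lVert A\rVert$ is not bounded in terms of $H_A$ in general, since a pattern may carry arbitrarily many relation symbols over the same hyperedges. This is harmless, because the instances in Marx's reduction can be normalised to one constraint per hyperedge, which is all your hardness direction needs.
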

We discuss properties studied in the literature that ensure bounded $\lambda$ in the technical sections. For now, note that these include the widely studied case of bounded degree as well as all classes with sparse incidence graphs, i.e., bounded incidence degeneracy.

\paragraph{2) $\FPT$ implies quasipolynomial time}
Since $\lambda(H)\le\log|V(H)|$ for $|V(H)|\ge2$, the main comparison turns a
constant adaptive-width bound into an $O(\log|V(H)|)$ fractional
hypertree-width bound. This gives the following consequence.
Write $\QP$ for quasipolynomial time.
\begin{theorem}
\label{thm:intro-hypergraph-fpt-to-qp}
Let $\cH$ be a recursively enumerable hypergraph class.
Assuming ETH,
\[
\pHom(\cH) \in \FPT \Rightarrow \Hom(\cH) \in \QP.
\]
\end{theorem}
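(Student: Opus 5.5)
The plan is to chain Marx's characterisation of fixed-parameter tractability with \Cref{thm:intro-width} and the classical fractional-hypertree-width algorithm. Assume ETH and $\pHom(\cH)\in\FPT$. By Marx~\cite{Marx13}, assuming ETH and $\cH$ recursively enumerable, $\pHom(\cH)\in\FPT$ implies that $\cH$ has bounded submodular width. Since $\adw(H)\le\subw(H)$, there is a constant $c$ with $\adw(H)\le c$ for all $H\in\cH$.

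Next I would apply \Cref{thm:intro-width} to each $H\in\cH$. Take $|V(H)|\ge 2$; single-vertex hypergraphs have $\fhw\le1$ and are trivial. Then $\lambda(H)\le\max\{1,\log\alpha(H)\}\le\max\{1,\log|V(H)|\}$, because $\alpha(H)\le|V(H)|$. This gives
\[
\fhw(H)\le C\,\max\{1,\log|V(H)|\}\cdot\max\{1,c\log c\}=O(\log|V(H)|),
\]
where both cases $\adw(H)\ge2$ and $\adw(H)<2$ of the theorem are absorbed into the constant. Hence every pattern in $\cH$ has fractional hypertree width $k=O(\log n)$, where $n$ is the input size (note $|V(H_A)|\le|A|$).

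Finally, I would solve an instance $(A,B)$ in two steps. First compute a fractional hypertree decomposition of $H_A$ of width $O(k^3)$ (or any polynomial in $k$) using Marx's approximation algorithm~\cite{fhw}. Its running time is $|V(H)|^{O(k^3)}$ times a polynomial; the exact exponent is the one uncertain point here, but any $n^{\mathrm{poly}(k)}$ bound suffices. With $k=O(\log n)$ this is $n^{\mathrm{polylog}(n)}$, which is quasipolynomial. Second, evaluate the decomposition in the standard way. Each bag $X$ with fractional edge cover weight $w$ has at most $\|B\|^{w}$ candidate partial homomorphisms, by the AGM bound, and these can be enumerated within that time via worst-case optimal joins. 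Semijoin passes along the tree then decide $A\to B$ in time $\|B\|^{O(w)}\cdot\mathrm{poly}(|A|,|B|)$, which again is $2^{\mathrm{polylog}(n)}$.

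The main obstacle is the bookkeeping in this last step: one must confirm that the known decomposition-finding and evaluation algorithms have running time of the form $N^{f(k)}$ with $f$ polynomial and no hidden $2^{2^k}$-type dependence, so that substituting $k=O(\log n)$ keeps the total quasipolynomial. If the approximation algorithm were too costly, I would instead fall back on exhaustively searching tree decompositions whose bags have fractional cover at most $k$. That search is also $n^{O(k)}$-bounded per bag candidate only if bags are restricted to unions of $O(k\log n)$ edges. In that case I would cite the integral rounding to $\ghw=O(k\log n)$ from Korchemna et al.\ to keep the search quasipolynomial.
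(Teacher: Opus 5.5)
Your proposal is correct and follows the paper's proof essentially step for step. Marx's ETH characterisation gives bounded $\subw$ and hence bounded $\adw$. \Cref{thm:intro-width} with $\lambda(H_A)=O(\log N)$ then gives $\fhw(H_A)=O(\log N)$, and Marx's cubic approximation followed by standard FHD evaluation gives time $N^{O(\log^3 N)}$.

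The running-time point you flag as uncertain is exactly what the paper cites: \cite[Theorem~4.1]{MarxFHW2010} gives an FHD of width $O(k^3)$ in time $\lVert H_A\rVert^{O(k^3)}$ whenever $\fhw(H_A)\le k$. Your fallback via exhaustive search and integral rounding is therefore unnecessary.
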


\paragraph{3) Exact counting on bounded-$\lambda$ classes}
It is open in general whether fixed-parameter tractability of deciding
$A\to B$ implies fixed-parameter tractability of counting all
homomorphisms from $A$ to $B$. We settle this question for
bounded-$\lambda$ hypergraph classes under ETH and obtain the stronger
conclusion that both are equivalent to polynomial-time homomorphism counting. Write
$\mathsf{\#Hom}(\cH)$ for the problem of counting all homomorphisms from patterns in
$\cH$, and write $p\text{-}\mathsf{\#Hom}(\cH)$ for its parameterisation
by the pattern. For every recursively enumerable class
$\cH$ of bounded $\lambda$, assuming ETH,
\[
 \pHom(\cH)\in\FPT
 \quad\Longleftrightarrow\quad
 p\text{-}\mathsf{\#Hom}(\cH)\in\FPT
 \quad\Longleftrightarrow\quad
 \mathsf{\#Hom}(\cH)\in\PTIME.
\]
Additionally, our main result addresses open questions regarding the relationship between the various intermediate widths defined from
normal, entropic, almost-entropic, and sharp-submodular polymatroids. These
quantities arise in database theory in the analysis of query algorithms,
semiring aggregation, and degree-aware worst-case join-size bounds
\cite{KhamisEtAl2019FAQAI,KhamisEtAl2024LpBounds,Suciu2023InformationInequalities}.
On bounded-$\lambda$ classes, they all lie within a logarithmic factor of the width over the modular polymatroid (the adaptive width).

\paragraph{Organisation of the paper.}
\Cref{sec:preliminaries} introduces technical preliminaries, the parameter $\lambda$, and the width measures used throughout the paper.
\Cref{sec:separators} proves \Cref{thm:intro-width}.
\Cref{sec:consequences} derives the complexity-theoretic consequences for hypergraph classes and extends them to classes of structures.
\Cref{sec:discussion} develops the consequences for exact homomorphism counting and the polymatroid width hierarchy.
\Cref{sec:related-work} discusses the relationship of our results to closely related works in the literature.
\Cref{sec:conclusion} concludes with open problems.

\section{Preliminaries}
\label{sec:preliminaries}
All logarithms are to base two unless otherwise specified.

\subsection{Relational Structures and Hypergraphs}

A hypergraph is a pair $H=(V(H),E(H))$, with $E(H) \subseteq 2^{V(H)}$. We refer to $V(H)$ as the vertices of $H$ and $E(H)$ as the edges of $H$. All hypergraphs are finite, have no empty edges, and have every vertex contained in an edge. We assume input hypergraphs are nonempty and connected, but allow induced subhypergraphs to be empty or disconnected. The \emph{rank} $\rank(H)$ is
the maximum hyperedge size.  The \emph{(vertex) degree} $\Delta(H)$ is
the largest number of hyperedges containing one vertex. 

For $W\subseteq V(H)$, the \emph{induced subhypergraph} $H[W]$ has vertex
set $W$ and hyperedge set
\[
 E(H[W])=\{e\cap W \mid e\in E(H),\ e\cap W\ne\emptyset\}.
\]
Note that this differs from other common subhypergraph notions where partial edges are not retained. In the context of relational structures, this corresponds to projecting the structure to some subset of the domain elements.

A finite relational signature $\tau$ assigns an arity $r_R$ to each relation
symbol $R$. A finite $\tau$-structure $A$ consists of a finite domain
$V(A)$ and a relation $R^A\subseteq V(A)^{r_R}$ for every $R\in\tau$.
Relations are represented by explicit lists of tuples, and $\lVert A\rVert$
denotes the encoding length. A homomorphism $h:A\to B$ is a map
$V(A)\to V(B)$ such that
\[
 (a_1,\ldots,a_{r_R})\in R^A
 \quad\Longrightarrow\quad
 (h(a_1),\ldots,h(a_{r_R}))\in R^B
\]
for every $R\in\tau$.

A relational structure naturally induces a hypergraph that captures the connections between its elements. The hypergraph $H_A$ has one vertex for each domain element occurring in a tuple. For every tuple $\mathbf a\in R^A$, it has the hyperedge consisting of the entries of $\mathbf a$. Different tuples that induce the same set contribute one hyperedge.

\paragraph{The $\lambda(H)$ parameter}
We move on to defining the parameter $\lambda(H)$ in \Cref{thm:intro-width}. It is the minimum of two independent hypergraph invariants and originates in a technical result in~\cite{KorchemnaEtAl2024}.

The \emph{primal graph} $P(H)$ has vertex set $V(H)$ and joins two distinct
vertices when they occur together in a hyperedge. Write $\alpha(H)$ for the
independence number of $P(H)$, i.e., the maximum number of vertices such that no two vertices are in the same hyperedge. The \emph{incidence graph} $I(H)$ is the bipartite graph with
parts $V(H)$ and $E(H)$ and incidence as adjacency. Define
\[
 \mu(H)=\max_{\substack{F\subseteq I(H)\\V(F)\ne\emptyset}}
          \min_{x\in V(F)}\deg_F(x).
\]
Thus $\mu(H)$ is the least integer $d$ such that every nonempty subgraph
of $I(H)$ has a vertex of degree at most $d$, called the
\emph{incidence degeneracy} of $H$~\cite{MatulaBeck1983}.
Then define
\begin{equation}
 \lambda(H)=\min\{\mu(H),\max\{1,\log\alpha(H)\}\}.
 \label{eq:lambda-intro}
\end{equation}
Note that both $\mu$ and $\alpha$ are nonincreasing under taking induced subhypergraphs.

\subsection{Tree Decompositions and Width Functions}

A tree decomposition of a hypergraph $H$ is a pair $(T,B)$, where $T$ is a tree and $B : V(T)\to 2^{V(H)}$ such that:
\begin{enumerate}
\item for each $e \in E(H)$ there is a node $u \in V(T)$ such that $e \subseteq B(u)$.
\item for every vertex $v \in V(H)$, the set of nodes $\{u \in V(T) \mid v \in B(u)\}$ is nonempty and forms a connected subgraph of $T$.
\end{enumerate}
Write
$\mathcal T(H)$ for the set of tree decompositions of $H$ and
$\operatorname{width}_b(T,B)=\max_{t\in V(T)}b(B_t)$ for the width of a
decomposition under a function $b:2^{V(H)}\to\R_{\ge0}$. For a nonempty family
$\mathcal F$ of such functions, its \emph{$\mathcal F$-width} is
\begin{equation}
 \operatorname{width}_{\mathcal F}(H)
 :=\sup_{b\in\mathcal F}\min_{(T,B)\in\mathcal T(H)}
                         \max_{t\in V(T)}b(B_t).
 \label{eq:F-width}
\end{equation}

For $X\subseteq V(H)$, the \emph{fractional edge-cover number} $\rho_H^*(X)$ is the minimum of
$\sum_e y_e$ over $y_e\ge0$ satisfying $\sum_{e\ni v}y_e\ge1$ for every
$v\in X$. Requiring $y_e\in\{0,1\}$ gives the integral edge-cover number
$\rho_H(X)$. Fractional and generalised hypertree width are
\[
 \fhw(H)=\operatorname{width}_{\{\rho_H^*\}}(H),
 \qquad
 \ghw(H)=\operatorname{width}_{\{\rho_H\}}(H).
\]
Treewidth is
$\operatorname{width}_{\{X\mapsto|X|\}}(H)-1$.
A fractional hypertree decomposition (FHD) supplies each bag with a
fractional edge cover. We use optimum bag covers and write
$\operatorname{width}(\mathcal D)=\operatorname{width}_{\rho_H^*}(T,B)$
for $\mathcal D=(T,B)$.

A \emph{polymatroid} is a function $b:2^{V(H)}\to\R_{\ge0}$ with
$b(\emptyset)=0$ that is monotone and submodular. That is,
\[
 b(X)\le b(Y) \quad\text{whenever }X\subseteq Y,
 \qquad\text{and}\qquad
 b(X\cup Y)+b(X\cap Y) \le b(X)+b(Y)
\]
for all $X,Y\subseteq V(H)$.
It is \emph{edge-dominated} if $b(e)\le1$ for every $e\in E(H)$.
Let $\mathcal S_H$ be the family of edge-dominated polymatroids and let
$\mathcal M_H\subseteq\mathcal S_H$ consist of the modular functions
$b(X)=\sum_{v\in X}w_v$, where $w_v\ge0$ and
$\sum_{v\in e}w_v\le1$ for every hyperedge $e$. We call such a vector $w$
an \emph{edge-dominated modular weighting}. Adaptive width and submodular
width are defined by~\cite{adw,Marx13}
\[
 \adw(H)=\operatorname{width}_{\mathcal M_H}(H),
 \qquad
 \subw(H)=\operatorname{width}_{\mathcal S_H}(H).
\]

It is well known that
$\adw(H)\le\subw(H)\le\fhw(H)\le\ghw(H)$~\cite{adw,Marx13,fhw}.
For $X\subseteq W\subseteq V(H)$, fractional cover cost satisfies
\begin{equation}
 \rho_{H[W]}^*(X)=\rho_H^*(X).
 \label{eq:induced-cover-cost}
\end{equation}
Indeed, a fractional cover of $H$ induces one of $H[W]$ by aggregating the
weights of all hyperedges with the same nonempty intersection with $W$ and
discarding hyperedges disjoint from $W$. Conversely, a fractional cover of
$H[W]$ lifts to $H$ by choosing, for every $f\in E(H[W])$, one hyperedge
$e_f\in E(H)$ with $e_f\cap W=f$ and assigning the weight of $f$ to $e_f$.
The first operation does not increase total weight, and the second preserves
it, which proves \eqref{eq:induced-cover-cost}.
Moreover, it is straightforward that adaptive width is nonincreasing under taking induced subhypergraphs, since modular
weightings extend by zero and tree decompositions restrict to vertex subsets.

\section{Fractional Separators and Adaptive Width}
\label{sec:separators}

Abo Khamis et al.~\cite{KhamisNgoSuciu2025PANDA} observed that fractional edge-cover duality lets us naturally view adaptive width and fractional hypertree
width as opposite orders of optimisation.
\[
 \adw(H)=\max_{b\in\mathcal M_H}\min_{(T,B)\in\mathcal T(H)}
                  \operatorname{width}_b(T,B),
 \qquad
 \fhw(H)=\min_{(T,B)\in\mathcal T(H)}\max_{b\in\mathcal M_H}
                  \operatorname{width}_b(T,B).
\]
Here $\mathcal M_H$ is the family of edge-dominated modular functions.
Weak minimax recovers only the familiar inequality
$\adw(H)\le\fhw(H)$.
Strong minimax does not apply directly. The space of tree decompositions has
no useful convex structure, and
$\operatorname{width}_b(T,B)=\max_{t\in V(T)}b(B_t)$ is the maximum of
linear functions of $b$, hence convex rather than concave in the
maximising variable.

The difference between the two optimisation orders is that adaptive
width allows us to choose a different decomposition for each $b$, whereas
fractional hypertree width asks for one decomposition whose bags are cheap
for every $b$. Our proof bridges this difference through balanced
separators. They are simpler objects than decompositions, but retain the
two properties we need. A decomposition supplies a balanced separator,
and suitable balanced separators can be assembled recursively into a
decomposition. We will do so through a relaxed version of balanced separators that provides us with the appropriate properties for a minimax argument connecting to $\adw$.

In \Cref{sec:separator-definitions}, we recall fractional balanced separators and their fractional edge-cover cost.
In \Cref{sec:adaptive}, we use minimax to show that adaptive width bounds the cost of such fractional balanced separators in every induced subhypergraph.
In \Cref{sec:rounding}, we round these separators and assemble them
recursively into a tree decomposition, proving \Cref{thm:intro-width}.

\subsection{Fractional Balanced Separators}
\label{sec:separator-definitions}

We first recall the hypergraph fractional balanced-separator relaxation of Korchemna
et al.~\cite[Definition~5.2]{KorchemnaEtAl2024}. It is a hyperedge-weighted
analogue of the spreading-metric formulation of $\rho$-separators due to Even,
Naor, Rao, and Schieber~\cite{EvenNaorRaoSchieber1999}, building on the metric
and region-growing framework of Leighton and Rao~\cite{LeightonRao1999}.

In the following, we consider functions $x : V(J) \to [0,1]$ of vertex demands that we interpret as \emph{fractional separators} on a hypergraph $J$. This relaxes the classical notion of a separator, in which vertices either belong to the separator or do not. For $S \subseteq V(J)$, we write $\mathbf 1_S$ for the vertex-demand function that maps vertices in $S$ to $1$ and all other vertices to $0$. For vertex demands $x$, define
\[
 \rho_J^*(x)
 :=\min\left\{
   \sum_{e\in E(J)}y(e):
   y:E(J)\to\mathbb R_{\ge0},\quad
   \sum_{e\ni v}y(e)\ge x(v)\ \forall v\in V(J)
 \right\}.
\]
Thus a fractional edge cover need only satisfy the vertex demand $x(v)$, rather than $\sum_{e\ni v}y(e)\ge 1$ uniformly. When $x$ is a fractional separator, this will eventually correspond to the ``cost'' of the separator.

For vertices $u,v$, let $\mathsf{paths}(P(J);u,v)$ be the set of simple
paths in $P(J)$ with endpoints $u$ and $v$.
We are interested in the minimum-demand path between any pair of vertices. To that end, define
\[
 \operatorname{dist}_{J,x}(u,v)
 :=\min_{Q\in\mathsf{paths}(P(J);u,v)}\sum_{z\in V(Q)}x(z),
 \qquad
 \operatorname{dist}_{J,x}^*(u,v)
 :=\min\{1,\operatorname{dist}_{J,x}(u,v)\}.
\]
Note that the sum includes both endpoints and we consider the minimum of the empty set to be $\infty$. When $x$ is seen as a fractional separator, the distance between $u$ and $v$ is a measure of how separated they are, with values above 1 meaning they are fully separated.
We therefore use the auxiliary quantity $\operatorname{dist}_{J,x}^*$, which is the truncation of $\operatorname{dist}_{J,x}$ to $[0,1]$. Note that if
$x=\mathbf 1_S$ for some $S\subseteq V(J)$, then
$\operatorname{dist}_{J,x}^*(u,v)=0$ exactly when $u$ and $v$ are connected in $P(J)-S$, and it equals $1$ otherwise.

We extend this distance to hyperedges naturally. For hyperedges $e,f$, define
\(
 \operatorname{dist}_{J,x}^*(e,f)
 :=\min_{u\in e,\,v\in f}\operatorname{dist}_{J,x}^*(u,v).
\)

We also consider a weighting $\gamma:E(J)\to[0,1]$ that records the contribution of each edge to the balance condition.
In particular, let $\gamma:E(J)\to[0,1]$ be nonzero and write
$\Gamma=\sum_e\gamma(e)$.  For $\theta\in(0,1)$, a function
$x:V(J)\to[0,1]$ is a \emph{fractional $(\gamma,\theta)$-balanced separator}
if
\begin{equation}
 \sum_{f\in E(J)}
 \gamma(f)\operatorname{dist}_{J,x}^*(e,f)
 \ge (1-\theta)\Gamma
 \qquad\forall e\in E(J).
 \label{eq:fractional-balance}
\end{equation}
Intuitively, the left-hand side can be read as a measure of how strongly edge $e$ is separated from all other edges $f$ (weighted by $\gamma$) by the fractional separator $x$.

When the separator is not fractional, i.e., it is a set $S\subseteq V(J)$, we say that $S$ is \emph{$(\gamma,\theta)$-balanced} if
every component $C$ of $J-S$ satisfies
\begin{equation}
 \sum_{\substack{e\in E(J)\\e\cap C\ne\emptyset}}\gamma(e)
 \le \theta\Gamma.
 \label{eq:integral-balance}
\end{equation}

\begin{lemma}[{\cite[Fact~5.4]{KorchemnaEtAl2024}}]
\label{lem:indicator-equivalence}
Let $J$ be a hypergraph, $S\subseteq V(J)$,
$\gamma:E(J)\to[0,1]$ nonzero, and $\theta\in(0,1)$.
Then $\mathbf 1_S$ is a fractional
$(\gamma,\theta)$-balanced separator if and only if $S$ is $(\gamma,\theta)$-balanced.
\end{lemma}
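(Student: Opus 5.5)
The plan is to unpack both definitions in the indicator case $x=\mathbf 1_S$ and compare them edge by edge. As noted after the definition of $\operatorname{dist}^*_{J,x}$, for $x=\mathbf 1_S$ we have $\operatorname{dist}^*_{J,x}(u,v)\in\{0,1\}$, and it equals $0$ exactly when $u,v\notin S$ and $u,v$ lie in the same component of $P(J)-S$. Equivalently, they lie in the same component $C$ of $J-S$, since components of $J-S$ coincide with those of $P(J)-S$. Consequently $\operatorname{dist}^*_{J,x}(e,f)=0$ iff some component $C$ of $J-S$ meets both $e$ and $f$, and it is $1$ otherwise. Moreover, an edge $e$ meets at most one component, because $e\setminus S$ is a clique in $P(J)-S$. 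So for each edge $e$ with $e\not\subseteq S$ there is a unique component $C_e$ with $e\cap C_e\neq\emptyset$.

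The left-hand side of \eqref{eq:fractional-balance} then becomes $\Gamma-\gamma(\mathcal E_e)$, where $\mathcal E_e$ is the set of $f$ with $\operatorname{dist}^*(e,f)=0$. If $e\subseteq S$, then $\mathcal E_e=\emptyset$ and the inequality holds trivially. Otherwise $\mathcal E_e=\{f: f\cap C_e\neq\emptyset\}$, so the condition for $e$ reads
\[
\sum_{f\cap C_e\neq\emptyset}\gamma(f)\le\theta\Gamma .
\]

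For the forward direction, take any component $C$ of $J-S$. It contains a vertex $v$, and since every vertex lies in an edge, pick $e\ni v$. Then $C_e=C$, and the fractional condition for $e$ gives \eqref{eq:integral-balance} for $C$. Conversely, if every component satisfies \eqref{eq:integral-balance}, apply it to $C=C_e$ for each $e\not\subseteq S$; edges with $e\subseteq S$ are trivial.

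The only delicate step is checking that $\operatorname{dist}^*(u,v)=0$ forces $u,v$ outside $S$ and path-connected avoiding $S$. This holds because the path sum includes both endpoints and all interior vertices. The case $u=v$ is covered by the trivial one-vertex path, which gives $\operatorname{dist}=x(u)$.
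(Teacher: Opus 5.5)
Your argument is correct: for $x=\mathbf 1_S$ the truncated distance is $\{0,1\}$-valued, each edge not contained in $S$ meets exactly one component of $P(J)-S$, and the per-edge fractional condition is then exactly the integral balance condition for that component. Edges inside $S$ are trivial, and every component is reached by some edge. The paper gives no proof of its own here and simply imports the statement as Fact~5.4 of Korchemna et al., so your direct unpacking of the two definitions supplies the verification that the citation stands in for.
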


So far, we have followed the definitions of~\cite{KorchemnaEtAl2024}. The linear-programming dual form of $\rho^*$ will also be important for our result. It is
\begin{equation}
 \rho_J^*(x)
 =\max\left\{
  \sum_{v\in V(J)}w_vx(v):
  w_v\ge0,\quad
  \sum_{v\in e}w_v\le1\ \forall e\in E(J)
 \right\}.
 \label{eq:cover-dual}
\end{equation}
Moreover, we will be interested in the least $\rho^*$ necessary for a $(\gamma,\theta)$-balanced fractional separator, for fixed $\gamma,\theta$. Define the optimum fractional cost of one separator instance by
\[
 \beta_{J,\theta}(\gamma)
 :=
 \min\left\{
  \rho_J^*(x) \mid
  x\text{ is a fractional }(\gamma,\theta)\text{-balanced separator}
 \right\}.
\]
Finally, we use fractional balanced separators as an intermediate object between
adaptive width and fractional hypertree width. The recursive construction
of a tree decomposition requires such separators in every induced
subhypergraph and for every weighting of its hyperedges. For this process, we define
\[
 \fsep_\theta(H)
 :=
 \sup_{\substack{W\subseteq V(H)\\
                  0\ne\gamma:E(H[W])\to[0,1]}}
 \beta_{H[W],\theta}(\gamma).
\]
Equivalently, $\fsep_\theta(H)$ is the least $b$ such that every induced
subhypergraph $J$ of $H$ and every nonzero hyperedge weighting $\gamma$
admit a fractional $(\gamma,\theta)$-balanced separator $x$ with
$\rho_J^*(x)\le b$.

\subsection{Adaptive Width and Fractional Separators}
\label{sec:adaptive}

The convexity of fractional balanced separators allows minimax to compare
their optimum cost with adaptive width. We prove
$\fsep_{1/2}(H)\le\adw(H)$.

\begin{lemma}
\label{lem:fractional-separator-convexity}
Let $J$ be a hypergraph, let $\gamma:E(J)\to[0,1]$ be nonzero, and let
$\theta\in(0,1)$. The set of fractional
$(\gamma,\theta)$-balanced separators is nonempty, compact, and convex.
\end{lemma}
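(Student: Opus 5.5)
We prove the three properties separately, working in the finite-dimensional space $[0,1]^{V(J)}$.

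\emph{Nonemptiness.} Take $x=\mathbf 1_{V(J)}$. Then $J-V(J)$ has no components, so the integral condition \eqref{eq:integral-balance} holds vacuously and $V(J)$ is $(\gamma,\theta)$-balanced. By \Cref{lem:indicator-equivalence}, $\mathbf 1_{V(J)}$ is a fractional $(\gamma,\theta)$-balanced separator. One can also check this directly: every vertex has demand $1$, so $\operatorname{dist}^*_{J,x}(u,v)=1$ for all $u,v$. This includes $u=v$, because the one-vertex path already has sum $x(u)=1$. Hence the left-hand side of \eqref{eq:fractional-balance} equals $\Gamma\ge(1-\theta)\Gamma$.

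\emph{Closedness.} The set lies in the compact cube $[0,1]^{V(J)}$, so it suffices to show it is closed. For a fixed simple path $Q$, the map $x\mapsto\sum_{z\in V(Q)}x(z)$ is linear. Since $P(J)$ is finite, there are finitely many simple paths, so $\operatorname{dist}_{J,x}(u,v)$ is a minimum of finitely many linear functions. If $u,v$ are disconnected, it is $+\infty$ and $\operatorname{dist}^*=1$ is constant. Thus $x\mapsto\operatorname{dist}^*_{J,x}(e,f)$ is a finite minimum of continuous functions and is continuous. The left-hand side of \eqref{eq:fractional-balance} is a nonnegative combination of these, hence continuous. The feasible set is therefore a finite intersection of closed superlevel sets intersected with the cube, so it is closed and hence compact.

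\emph{Convexity.} This is the main point. A minimum of linear functions is concave, not convex, so we cannot argue via convex constraint functions. Instead, observe that \eqref{eq:fractional-balance} is a lower bound on a concave function, and superlevel sets of concave functions are convex. Concretely:
\begin{enumerate}
\item $\operatorname{dist}_{J,x}(u,v)$ is a pointwise minimum of linear functions of $x$, so it is concave. Taking $\min$ with the constant $1$ preserves concavity.
\item $\operatorname{dist}^*_{J,x}(e,f)$ is a minimum over pairs $u\in e$, $v\in f$ of concave functions, so it is concave.
\item The left-hand side $\sum_f\gamma(f)\operatorname{dist}^*_{J,x}(e,f)$ has $\gamma\ge0$, so it is concave in $x$.
\end{enumerate}
Each constraint set $\{x:\text{LHS}_e(x)\ge(1-\theta)\Gamma\}$ is therefore convex. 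Intersecting these over all $e\in E(J)$, together with the convex cube $[0,1]^{V(J)}$, gives a convex set.

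The only subtle points are the correct direction of concavity and the handling of $+\infty$ distances for disconnected pairs. The truncation at $1$ resolves both.
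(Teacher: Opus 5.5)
Your proposal is correct and follows essentially the same route as the paper. Both write the truncated distance as a finite minimum of affine functions and the constant $1$, so each balance constraint is a closed, convex superlevel set of a continuous concave function inside the compact cube $[0,1]^{V(J)}$, and both use the all-one vector to show nonemptiness.
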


\begin{proof}
For $e,f\in E(J)$, let $\mathcal P(e,f)$ be the set of simple paths in
$P(J)$ joining a vertex of $e$ to a vertex of $f$, including paths of
length zero. By definition,
\[
 \operatorname{dist}_{J,x}^*(e,f)
 =\min\left(
    \{1\}\cup
    \left\{\sum_{v\in V(Q)}x(v)\mid Q\in\mathcal P(e,f)\right\}
   \right).
\]
Thus $x\mapsto\operatorname{dist}_{J,x}^*(e,f)$ is continuous and concave,
since it is the minimum of finitely many affine functions. Since $\gamma\ge0$,
the left-hand side of each balance constraint~\eqref{eq:fractional-balance}
is also continuous and concave. Each constraint therefore defines a closed
convex subset of $[0,1]^{V(J)}$. Their intersection is convex and is a closed
subset of the compact cube $[0,1]^{V(J)}$, hence compact. It is nonempty
because it contains the all-one vector.
\end{proof}

\begin{lemma}
\label{lem:cover-minimax}
Let $J$ be a hypergraph, let $\gamma:E(J)\to[0,1]$ be nonzero, and let
$\theta\in(0,1)$.  There
is $b \in \mathcal{M}_J$ such that
\[
 \sum_{v\in V(J)}b(\{v\})x(v)
 \ge \beta_{J,\theta}(\gamma)
\]
for every fractional $(\gamma,\theta)$-balanced separator $x$.
\end{lemma}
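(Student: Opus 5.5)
We apply a minimax theorem to the bilinear pairing between fractional separators and edge-dominated modular weightings. Let $K\subseteq[0,1]^{V(J)}$ be the set of fractional $(\gamma,\theta)$-balanced separators. By \Cref{lem:fractional-separator-convexity}, $K$ is nonempty, compact, and convex. Let
\[
 D=\Bigl\{w\in\R_{\ge0}^{V(J)} : \textstyle\sum_{v\in e}w_v\le 1\ \forall e\in E(J)\Bigr\}.
\]
The set $D$ is convex and closed. It is also bounded, because every vertex lies in some hyperedge and hence $w_v\le1$; so $D$ is a compact polytope. Elements of $D$ correspond exactly to the functions $b\in\mathcal M_J$ via $b(X)=\sum_{v\in X}w_v$, so $b(\{v\})=w_v$. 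Consider the bilinear, hence continuous, function $\phi(x,w)=\sum_v w_v x(v)$ on $K\times D$.

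First, by the dual form~\eqref{eq:cover-dual}, we have $\rho_J^*(x)=\max_{w\in D}\phi(x,w)$ for every $x\in K$. It follows that
\[
 \beta_{J,\theta}(\gamma)=\min_{x\in K}\max_{w\in D}\phi(x,w).
\]
The outer minimum is attained since $K$ is compact and $x\mapsto\rho_J^*(x)$ is continuous, being a maximum of finitely many linear functions over the vertices of $D$.

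Second, $\phi$ is linear, so convex in $x$ and concave in $w$, and both $K$ and $D$ are compact convex sets. Von Neumann's minimax theorem (or Sion's) therefore gives
\[
 \min_{x\in K}\max_{w\in D}\phi(x,w)=\max_{w\in D}\min_{x\in K}\phi(x,w).
\]
The outer maximum on the right is attained because $w\mapsto\min_{x\in K}\phi(x,w)$ is a minimum of linear functions, hence upper semicontinuous, on the compact set $D$.

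Third, pick a maximiser $w^\star$ and set $b(X)=\sum_{v\in X}w^\star_v$. Then $b\in\mathcal M_J$, and for every $x\in K$,
\[
 \sum_v b(\{v\})x(v)=\phi(x,w^\star)\ge\min_{x'\in K}\phi(x',w^\star)=\beta_{J,\theta}(\gamma),
\]
which is the claim.

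The only step needing care is the applicability of minimax. This is where \Cref{lem:fractional-separator-convexity} does the real work, since convexity of $K$ is essential and is not automatic given the truncated path distances. Beyond that, we only need compactness of $D$, for which the standing assumption that every vertex lies in a hyperedge suffices.
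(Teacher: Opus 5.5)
Your proposal is correct and follows essentially the same route as the paper. You use LP duality to write $\beta_{J,\theta}(\gamma)$ as a min--max of the bilinear pairing over the compact convex set of separators and the compact polytope $\mathcal M_J$, then apply the minimax theorem and take a maximising $b$; you also spell out attainment of the outer minimum and maximum, which the paper leaves implicit.
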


\begin{proof}
Let $\mathcal X$ be the set of fractional
$(\gamma,\theta)$-balanced separators. Identify each $b\in\mathcal M_J$
with its singleton values $(b(\{v\}))_{v\in V(J)}$. By \Cref{eq:cover-dual},
\[
 \beta_{J,\theta}(\gamma)
 =\min_{x\in\mathcal X}\max_{b\in\mathcal M_J}\sum_vb(\{v\})x(v).
\]
By \Cref{lem:fractional-separator-convexity}, $\mathcal X$ is nonempty,
compact, and convex. In these coordinates, $\mathcal M_J$ is a closed convex
polytope containing the zero vector. Every vertex lies in a hyperedge, so
$\mathcal M_J\subseteq[0,1]^{V(J)}$ and $\mathcal M_J$ is compact. Furthermore, the map
$(x,b)\mapsto\sum_vb(\{v\})x(v)$ is continuous and bilinear. A standard
application of the minimax theorem over convex sets~\cite{Sion58} thus gives
\begin{equation}
 \beta_{J,\theta}(\gamma)
 =\max_{b\in\mathcal M_J}\min_{x\in\mathcal X}\sum_vb(\{v\})x(v).
 \label{eq:separator-minimax}
\end{equation}
Any maximising $b$ has the required property.
\end{proof}

\begin{lemma}
\label{thm:fsep-adw}
For every hypergraph $H$,
\(
 \fsep_{1/2}(H)\le\adw(H).
\)
\end{lemma}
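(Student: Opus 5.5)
Fix $W\subseteq V(H)$, write $J=H[W]$, and fix a nonzero $\gamma:E(J)\to[0,1]$. We must show $\beta_{J,1/2}(\gamma)\le\adw(H)$. Apply \Cref{lem:cover-minimax} with $\theta=1/2$ to obtain $b\in\mathcal M_J$ such that $\sum_v b(\{v\})x(v)\ge\beta_{J,1/2}(\gamma)$ for every fractional $(\gamma,1/2)$-balanced separator $x$. It therefore suffices to exhibit a single fractional $(\gamma,1/2)$-balanced separator $x$ with $\sum_v b(\{v\})x(v)\le\adw(H)$. We take $x=\mathbf 1_S$ for a set $S$ that is a bag of a suitable tree decomposition of $J$.

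First, since adaptive width is nonincreasing under induced subhypergraphs, $\adw(J)\le\adw(H)$. As $b$ is an edge-dominated modular function on $J$, the definition of $\adw(J)$ provides a tree decomposition $(T,B)$ of $J$ with $b(B_t)\le\adw(J)$ for every node $t$. Here we use that the minimum over $\mathcal T(J)$ is attained; this holds because one may restrict to decompositions whose bags are distinct subsets, which gives finitely many candidates. Then $\sum_v b(\{v\})\mathbf 1_S(v)=b(S)\le\adw(H)$ for any bag $S=B_t$.

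The remaining step, and the only real content, is the classical fact that some bag is a $(\gamma,1/2)$-balanced separator in the sense of \eqref{eq:integral-balance}. By \Cref{lem:indicator-equivalence}, this makes $\mathbf 1_S$ a fractional one. To find the bag, assign each edge $e$ with $\gamma(e)>0$ to a node $t_e$ whose bag contains $e$, and give node $t$ weight $\sum_{e:t_e=t}\gamma(e)$, for a total of $\Gamma$. A weighted centroid $t^*$ exists: each component of $T-t^*$ carries node weight at most $\Gamma/2$. Take $S=B_{t^*}$. Let $C$ be a component of $J-S$, i.e.\ of the primal graph $P(J)-S$. By the standard connectivity argument for tree decompositions, all vertices of $C$ occur only in bags of a single component $T_C$ of $T-t^*$. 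Any edge $e$ meeting $C$ with $\gamma(e)>0$ is a subset of $B_{t_e}$, and $B_{t_e}$ contains a vertex of $C$, so $t_e\in T_C$ (as $t_e\neq t^*$ because $C\cap S=\emptyset$). Hence the $\gamma$-weight of edges meeting $C$ is at most the node weight of $T_C$, which is at most $\Gamma/2$. Edges with $\gamma(e)=0$ contribute nothing.

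Combining the steps gives $\beta_{J,1/2}(\gamma)\le b(S)\le\adw(H)$. Taking the supremum over $W$ and $\gamma$ yields the lemma. The expected obstacle is only bookkeeping in the centroid step, namely getting the inequality $\le\theta\Gamma$ rather than a strict one and handling components of $J-S$ correctly under the induced-subhypergraph convention. The minimax part is already packaged in \Cref{lem:cover-minimax}.
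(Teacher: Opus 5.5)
Your proposal is correct and follows essentially the same route as the paper: take $b$ from \Cref{lem:cover-minimax}, use a $\gamma$-weighted centroid node to show that some bag $B_t$ of a tree decomposition of $J$ is $(\gamma,1/2)$-balanced, and conclude $\beta_{J,1/2}(\gamma)\le b(B_t)\le\adw(J)\le\adw(H)$. The only difference is cosmetic: you fix a $b$-optimal decomposition and justify that the minimum is attained, whereas the paper runs the argument for an arbitrary decomposition and then minimises, which avoids the attainment question.
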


\begin{proof}
Fix $W\subseteq V(H)$, let $J=H[W]$, and fix a nonzero weighting
$\gamma:E(J)\to[0,1]$. Write $\Gamma=\sum_{e\in E(J)}\gamma(e)$ and
$\beta=\beta_{J,1/2}(\gamma)$. Choose $b\in\mathcal M_J$ from
\Cref{lem:cover-minimax}, so that
$\sum_{v\in V(J)}b(\{v\})x(v)\ge\beta$ for every fractional
$(\gamma,1/2)$-balanced separator $x$.

Let $(T,B)$ be any tree decomposition of $J$. For each $e\in E(J)$,
choose one node $a(e)\in V(T)$ with $e\subseteq B_{a(e)}$. Give each
node $s$ weight $q(s)=\sum_{e\in E(J),\,a(e)=s}\gamma(e)$, adding the
weights of all hyperedges assigned to $s$. Thus
$\sum_{s\in V(T)}q(s)=\Gamma$.

We first show that there is a node $t$ such that every component $K$ of $T-t$ satisfies
$$\sum_{s\in V(K)}q(s)\le\Gamma/2.$$ To construct $t$, start at any
node $s$. Whenever a component of $T-s$ has total $q$-weight greater
than $\Gamma/2$, move to the neighbour of $s$ in that component.
After a move from $s$ to $s'$, the component of $T-s'$ containing $s$
has weight less than $\Gamma/2$, so the next move cannot return to $s$.
A walk in a tree that never immediately reverses an edge cannot revisit
a node. Since $T$ is finite, the process terminates at a node $t$ with
the required property.

We next show that $B_t$ is $(\gamma,1/2)$-balanced. Let $C$ be the vertex
set of a component of $P(J)-B_t$. For each $v\in C$, the set
$T_v=\{s\in V(T)\mid v\in B_s\}$ is connected and avoids $t$, so it
lies in one component of $T-t$. If $u,v\in C$ are adjacent in $P(J)$,
some hyperedge contains both, and a bag containing that hyperedge
belongs to $T_u\cap T_v$. Since $C$ is connected, all sets $T_v$ with
$v\in C$ lie in a single component $K$ of $T-t$. For every hyperedge
$e$ meeting $C$, choose $v\in e\cap C$. Then
$a(e)\in T_v\subseteq V(K)$, since $e\subseteq B_{a(e)}$. Therefore
$$\sum_{e\in E(J),\,e\cap C\ne\emptyset}\gamma(e)
\le\sum_{s\in V(K)}q(s)\le\Gamma/2$$
as required by \eqref{eq:integral-balance}.

By \Cref{lem:indicator-equivalence}, $\mathbf 1_{B_t}$ is a fractional
$(\gamma,1/2)$-balanced separator. The choice of $b$ gives
$b(B_t)=\sum_{v\in V(J)}b(\{v\})\mathbf 1_{B_t}(v)\ge\beta$.
Since $(T,B)$ was arbitrary and $b\in\mathcal M_J$, we obtain
\[
 \beta
 \le\min_{(T,B)\in\mathcal T(J)}\max_{s\in V(T)}b(B_s)
 \le\adw(J)\le\adw(H).
\]
Taking the supremum over $W$ and $\gamma$
gives $\fsep_{1/2}(H)\le\adw(H)$.
\end{proof}

\subsection{Rounding Fractional Separators and Decompositions}
\label{sec:rounding}

Korchemna et al.~\cite{KorchemnaEtAl2024} round fractional half-balanced
separators to integral separators with controlled fractional edge-cover
number. Combined with $\fsep_{1/2}(H)\le\adw(H)$, their result supplies the
separators needed for the recursive decomposition.

\begin{proposition}[{\cite[Theorem~5.6]{KorchemnaEtAl2024}}]
\label{thm:rounding}
Let $J$ be a hypergraph, let $0\ne\gamma:E(J)\to[0,1]$, and let $x$ be a
fractional $(\gamma,1/2)$-balanced separator. There is an integral
$(\gamma,5/6)$-balanced separator $S\subseteq V(J)$ satisfying
\[
 \rho_J^*(S)\le
 C_0\, \lambda(J)\,
 \rho_J^*(x)\log\max\{2,\rho_J^*(x)\}
\]
for a universal constant $C_0$.
\end{proposition}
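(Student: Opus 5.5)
The plan is to follow the spreading-metric rounding of Even, Naor, Rao and Schieber, adapted to hyperedge weights and to fractional edge-cover cost in the way Korchemna et al.\ do it. There are three ingredients: extract a metric gap from the fractional balance condition, grow regions to find cheap cuts, and bound the cover cost of the union of cuts, which is where $\lambda(J)$ appears.

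\emph{Step 1 (metric gap).} Fix an edge $e$ and let $N_e$ be the $\gamma$-weight of the edges $f$ with $\operatorname{dist}_{J,x}^*(e,f)<1/4$. Since $\operatorname{dist}^*\le 1$, the balance condition~\eqref{eq:fractional-balance} with $\theta=1/2$ gives
\[
 \Gamma/2\le N_e/4+(\Gamma-N_e),
\]
so $N_e\le 2\Gamma/3$. Hence every ball of radius $1/4$ around an edge, in the metric $\operatorname{dist}_{J,x}$, carries at most a $2/3$ fraction of the total weight. This slack is what leads to the constant $5/6$ in the conclusion.

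\emph{Step 2 (region growing).} Fix an optimal fractional cover $y$ of $x$, and put $\rho=\rho_J^*(x)=\sum_e y(e)$. For a centre $u$ and a radius $r\in[0,1/8]$, let $S_r(u)$ be the set of vertices $z$ whose distance interval $[\operatorname{dist}_{J,x}(u,z)-x(z),\operatorname{dist}_{J,x}(u,z)]$ contains $r$. Removing $S_r(u)$ separates the inner ball from the rest of $J$. Since $\int\mathbf 1_{S_r(u)}(z)\,dr\le x(z)$, the cuts cost $y$-weight at most $x$ on average. The Leighton--Rao volume argument then chooses $r$ so that the boundary cost is at most $O(\log\max\{2,\rho\})$ times the local $y$-volume of the ball, normalised by $\rho$. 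Carving balls repeatedly from the remaining hypergraph, always centred at an edge, and charging each cut to its ball's volume gives a family of cuts whose total fractional cost is $O(\rho\log\max\{2,\rho\})$. By Step~1 and the choice of radius below $1/4$, no piece contains more than $2/3$ of $\Gamma$. Grouping pieces then yields $(\gamma,5/6)$-balance.

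\emph{Step 3 (from fractional cut cost to $\rho_J^*(S)$), the main obstacle.} The difficulty is that $\rho_J^*$ is only subadditive. A vertex can lie in a cut while the cover mass it receives from $y$ is spread over many hyperedges that the cut does not pay for. So $\rho_J^*(\bigcup S_{r_i})$ is not controlled by $\sum_i\int y$ automatically, and one factor of $\lambda(J)$ is lost here. Two routes are available, and the better one is taken.
\begin{itemize}
 \item If $\mu(J)$ is small, orient the incidence graph by a degeneracy ordering. Each vertex is then charged only to at most $\mu$ hyperedges, and a thresholded copy of $y$, scaled by $O(\mu)$, covers every cut vertex fully.
 \item If $\alpha(J)$ is small, the balls carved in Step~2 with pairwise separated centres correspond to an independent set in $P(J)$. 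So only $\alpha(J)$ genuinely different scales or centres occur, and the region-growing argument can be run with $\log\alpha$ in place of a volume-based logarithm.
\end{itemize}
Taking the better route gives the factor $\lambda(J)$ and the bound $C_0\lambda(J)\rho\log\max\{2,\rho\}$. Since the statement is exactly \cite[Theorem~5.6]{KorchemnaEtAl2024}, in the paper one would cite it rather than reprove it. Only the compatibility of their definitions with ours needs checking, namely $\operatorname{dist}$ including both endpoints, truncation at $1$, and $\lambda$ as defined in~\eqref{eq:lambda-intro}.
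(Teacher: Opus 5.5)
Your bottom line, citing \cite[Theorem~5.6]{KorchemnaEtAl2024} and checking compatibility, is what the paper does. Steps~1--3, however, should not be offered as a proof, because they do not establish the bound.

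In Step~2, Leighton--Rao region growing gives a per-ball factor $\log(V_{\mathrm{final}}/V_{\mathrm{seed}})$ plus an additive seed charge for every ball. To get $\log\max\{2,\rho\}$ rather than a logarithm in $|E(J)|$, you must bound the number of carved balls in terms of $\rho$, or supply an equivalent refinement in the style of Even--Naor--Rao--Schieber. That is the nontrivial part, and you do not argue it.

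In Step~3, the $\alpha$-route has two problems:
\begin{enumerate}
\item It rests on the claim that metrically separated centres give an independent set of $P(J)$. This is false: adjacent vertices $u,v$ satisfy $\operatorname{dist}_{J,x}(u,v)=x(u)+x(v)$, which need not be small.
\item It uses $\log\alpha(J)$ \emph{in place of} the volume logarithm. It therefore never addresses the $\rho_J^*$-conversion loss that you yourself identify as the main obstacle.
\end{enumerate}
The $\mu$-route asserts that an $O(\mu)$-scaled thresholding of $y$ covers every cut vertex $v$ to demand $1$. Since $y$ supplies only $x(v)$ at $v$, this needs a mechanism, and none is given.

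The compatibility check you defer is where the paper's proof actually lies. It is not about the distance conventions, which the paper takes over from Korchemna et al.\ unchanged. The cited theorem with $\varphi=1/2$ gives the explicit bound $\rho_J^*(S)\le(\min\{8+4\ln\alpha(J[\supp x]),6\mu(J)\}+1)(88+16\log(2r))r$ with $r=\rho_J^*(x)$, which is meaningless at $r=0$. The paper therefore proceeds as follows.
\begin{enumerate}
\item If $\emptyset$ is $(\gamma,5/6)$-balanced, it takes $S=\emptyset$.
\item Otherwise $r>0$. Indeed, $r=0$ forces $x=0$, and then $\emptyset$ would be $(\gamma,1/2)$-balanced by \Cref{lem:indicator-equivalence}.
\item Using $\alpha(J[\supp x])\le\alpha(J)$ and $\mu(J)\ge1$ (as $\gamma\ne0$ gives $E(J)\ne\emptyset$), it bounds the first factor by $\min\{7\mu(J),9+4\ln\alpha(J)\}=O(\lambda(J))$.
\item It bounds the second factor by $88+16\log(2r)\le 88+16\max\{0,\log(2r)\}=O(\log\max\{2,r\})$.
\end{enumerate}
These steps are short, but they are exactly what turns the cited result into the stated form.
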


\begin{proof}
If $\emptyset$ is $(\gamma,5/6)$-balanced, take $S=\emptyset$.
Otherwise set $r=\rho_J^*(x)$. If $r=0$, then $x=0$, so
$\mathbf 1_\emptyset=x$ is fractionally $(\gamma,1/2)$-balanced.
By \Cref{lem:indicator-equivalence}, $\emptyset$ is integrally
$(\gamma,1/2)$-balanced and hence $(\gamma,5/6)$-balanced, a contradiction.
Thus $r>0$.
Write $\supp x=\{v\in V(J)\mid x(v)>0\}$.
The cited theorem with $\varphi=1/2$ produces a $(\gamma,5/6)$-balanced
separator and gives
\[
 \rho_J^*(S)
 \le
 \bigl(\min\{8+4\ln\alpha(J[\supp x]),6\mu(J)\}+1\bigr)
 \bigl(88+16\log(2r)\bigr)r.
\]
Put $A=\alpha(J[\supp x])$. Then $A\le\alpha(J)$. Since $\mu(J)\ge1$,
the first factor is at most $7\mu(J)$ and at most
$9+4\ln\alpha(J)$. It is therefore $O(\lambda(J))$. Moreover,
\[
 88+16\log(2r)
 \le 88+16\max\{0,\log(2r)\}
 =O(\log\max\{2,r\}).
\]
Absorbing the constants into $C_0$ proves the claim.
\end{proof}

The parameter $\lambda$ enters only through this rounding step. Any
alternative rounding bound would propagate through the same recursion.

We now turn these rounded separators into a tree decomposition using the
recursive framework of Robertson and Seymour~\cite{RobertsonSeymour1995}.
A related separator lemma for monotone subadditive set functions appears in
Marx~\cite[Section~5.2]{adw}.

\begin{lemma}
\label{lem:recursive-decomposition}
Let $H$ be a hypergraph, let $\eta\in(0,1)$, and let $r\ge0$.
Suppose that for every $W\subseteq V(H)$ and every nonzero
$\gamma:E(H[W])\to[0,1]$, there is a $(\gamma,\eta)$-balanced separator
$S\subseteq W$ with $\rho_{H[W]}^*(S)\le r$. Then
\[
 \fhw(H)\le\frac{2(2-\eta)}{1-\eta}r.
\]
\end{lemma}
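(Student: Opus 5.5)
We follow the Robertson--Seymour recursion: build decompositions of pieces of $H$ that contain a prescribed ``boundary'' set in a bag, split using a balanced separator, and recurse. Set $c=\frac{2(2-\eta)}{1-\eta}$. The goal is the following claim.

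\textbf{Claim.} For every $W\subseteq V(H)$ and every $X\subseteq W$ with $\rho_H^*(X)\le \frac{r}{1-\eta}$ (the boundary), the subhypergraph $H[W]$ has a tree decomposition in which every bag $B$ has $\rho_H^*(B)\le cr$, and some bag contains $X$.

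Applying the claim with $W=V(H)$ and $X=\emptyset$ gives the lemma. Cover costs may be taken in $H$ throughout, by \eqref{eq:induced-cover-cost}.

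The claim is proved by induction on $|W\setminus X|$.

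\textbf{Base case.} If $\rho_H^*(W)\le cr$, a single bag $W$ suffices.

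\textbf{Weighting the boundary.} Otherwise, fix an optimal fractional edge cover $y$ of $X$ in $H$, of total weight at most $\frac{r}{1-\eta}$. Push it to $J=H[W]$ by aggregating the weight on $e\cap W$, and truncate to $[0,1]$ to obtain $\gamma$. If $\gamma$ is zero, then $X=\emptyset$. In that case use $\gamma\equiv 1$ on all edges, or simply any nonzero weighting, so that the separator still makes progress.

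\textbf{Splitting.} The hypothesis gives a $(\gamma,\eta)$-balanced separator $S\subseteq W$ with $\rho^*(S)\le r$. For each component $C$ of $J-S$, recurse on
\[
W_C = C\cup N(C), \qquad X_C=(X\cap C)\cup S_C,
\]
where $S_C\subseteq S$ is the set of separator vertices adjacent to $C$. Actually it is cleaner to take $X_C=(X\cap \overline C)\cup S$, where $\overline C$ is the closure of $C$. The new root bag is $X\cup S$, with cost at most
\[
\frac{r}{1-\eta}+r\le cr .
\]
Glue each child decomposition at its bag containing $X_C$ to the root. Connectivity of vertex occurrences holds because $C$ and $C'$ share only vertices of $S\cup X$, which all lie in the root bag. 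Every edge lies inside some $\overline C$ or inside $S\cup X$.

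\textbf{Boundary bound (the main obstacle).} We must show
\[
\rho^*(X_C)\le \frac{r}{1-\eta}.
\]
The vertices of $X\cap\overline C$ are covered by the edges of $y$ meeting $C$, plus boundary vertices in $S$. Balance gives that the $\gamma$-mass of edges meeting $C$ is at most $\eta\Gamma\le \eta\frac{r}{1-\eta}$. Adding the cost $r$ of $S$ gives
\[
\rho^*(X_C)\le \frac{\eta r}{1-\eta}+r=\frac{r}{1-\eta}.
\]
This is the invariant, and it explains the choice of $\frac{1}{1-\eta}$.

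The delicate points here are the following:
\begin{itemize}[label={}]
\item the truncation of $y$ to $[0,1]$: if some $y_e>1$, replace $y$ by an optimal cover, which can be chosen with $y_e\le 1$;
\item vertices of $X$ lying in $S$, which are already in the root bag.
\end{itemize}

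\textbf{Progress (second subtle point).} The induction needs $|W_C\setminus X_C|<|W\setminus X|$. If $X_C\supseteq X$ and the component equals everything, there is no progress. Since $\rho^*(W)>cr$ while $\rho^*(X\cup S)\le cr$, the set $S\cup X$ is a proper subset of $W$. If a single component $C$ has $\overline C= W$, then balance forces its edge mass to be at most $\eta\Gamma<\Gamma$. This is a contradiction unless $\Gamma$ is concentrated on edges inside $X\cup S$.

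To resolve this, I would recurse only on $W_C=C\cup X_C$, which strictly removes the vertices of other components. The remaining degenerate case is $S\subseteq X$ with one component. Here I would add one vertex of $C$ to $S$, costing at most $1$. I expect this bookkeeping to account for the extra factor $2$ in the constant $\frac{2(2-\eta)}{1-\eta}$. In particular, it is absorbed by using the bound $\frac{2r}{1-\eta}$ on the boundary with the same averaging argument, together with $r\ge\frac12$ whenever separation is nontrivial.
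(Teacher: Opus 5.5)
\textbf{The gap: the progress step.} Your boundary argument matches the paper's: cover the boundary optimally by $\gamma$, take a $(\gamma,\eta)$-balanced $S$, and use balance to get $\rho^*(X\cap C)\le\eta\Gamma$. The gap is in the progress step. You leave it as an expectation, and the arithmetic you sketch does not close.

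Adding a vertex $v$ to the separator raises the child-boundary bound to $\eta\Lambda+r+1$. This breaks your invariant $\Lambda=\frac{r}{1-\eta}$. With $\Lambda=\frac{2r}{1-\eta}$ the invariant is restored exactly when $r+1\le 2r$, i.e.\ $r\ge1$. Your condition $r\ge\frac12$ does not suffice: it only gives $r+1\le3r$, which forces $\Lambda=\frac{3r}{1-\eta}$ and a worse constant. You also give no argument for it.

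The missing observation is that the hypothesis itself forces $r\ge1$. Take $W=\{v\}$ and $\gamma=1$ on the unique edge $\{v\}$ of $H[\{v\}]$. The empty set is not $(\gamma,\eta)$-balanced, so the separator must be $\{v\}$, whose cover cost is $1$.

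There are two smaller problems. First, the added vertex must lie in $W\setminus X$, not merely in $C$. Second, induction on $|W\setminus X|$ alone does not handle the stall in which $S\subseteq X$ and all components but one lie inside $X$: recursing on $C\cup X_C$ then shrinks $|W|$ but not $|W\setminus X|$.

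Both are fixed by always adding some $v\in W\setminus X$ to $S$. Supersets of balanced separators remain balanced. Every component $C$ of $J-(S\cup\{v\})$ then satisfies $|C\setminus X|<|W\setminus X|$. With $\Lambda=\frac{2r}{1-\eta}$ and $r\ge1$, the root bag costs at most $\Lambda+r+1\le\Lambda+2r$, and each child boundary costs at most $\eta\Lambda+r+1\le\Lambda$. This yields the stated bound.

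\textbf{Comparison with the paper.} Once repaired, your route is a valid and somewhat leaner alternative to the paper's. The paper obtains progress from a second application of the hypothesis. It takes a separator $R$ that is $(\pi,\eta)$-balanced for a weighting $\pi$ distributing the vertices of $W\setminus Z$ over incident edges. Then every component $K$ of $J-(Z\cup S\cup R)$ has $|K|\le\eta|W\setminus Z|$.

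This is the actual source of the factor $2$ in the constant. Child boundaries lie in $S\cup R\cup(Z\cap C)$, so $\Lambda=\frac{2r}{1-\eta}$ and bags cost $\Lambda+2r$. The paper's version needs no lower bound on $r$, and it shrinks the uncovered part geometrically, giving logarithmic recursion depth.

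Your single-vertex trick calls the hypothesis once per node and pays only an additive $1$. With $\Lambda=\frac{r+1}{1-\eta}$ it even gives the bound $\frac{(2-\eta)(r+1)}{1-\eta}$, which is at most the stated one because $r\ge1$.
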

\begin{proof}
Set $\Lambda=2r/(1-\eta)$. We prove by induction on $|W\setminus Z|$ that whenever $Z\subseteq W\subseteq V(H)$ and $\rho_H^*(Z)\le\Lambda$, the hypergraph $H[W]$ has a tree decomposition with a bag containing $Z$ and every bag $B_t$ satisfying
$\rho_H^*(B_t)\le\Lambda+2r$.
The set $Z$ is the boundary through which this decomposition will attach
to its parent. We use a separator $S$ to control the cover cost of each
child boundary and a separator $R$ to ensure that each child has fewer
vertices outside its boundary.

If $W=Z$, take the one-bag decomposition with bag $Z$. Its bag cost is at
most $\Lambda$, so the claim holds. Suppose that $W\ne Z$ and let $J=H[W]$.
If $Z=\emptyset$, set
$S=\emptyset$. If $Z\ne\emptyset$, choose an optimum
fractional edge cover $\gamma:E(J)\to[0,1]$ of $Z$. Because $Z \subseteq V(J)$ and fractional cover cost is invariant under taking induced subhypergraphs,
\[
 \sum_{e\in E(J)}\gamma(e)=\rho_J^*(Z)=\rho_H^*(Z).
\]
Since $Z\ne\emptyset$, the weighting $\gamma$ is nonzero, and by the assumption in the statement there is a
$(\gamma,\eta)$-balanced separator $S\subseteq W$ with
$\rho_J^*(S)\le r$ and thus also $\rho_H^*(S)\le r$.

For each $v\in W\setminus Z$, choose $e_v\in E(J)$ with $v\in e_v$, and
define $\pi:E(J)\to[0,1]$ by
\[
 \pi(e)=\frac{|\{v\in W\setminus Z : e_v=e\}|}{|W\setminus Z|}.
\]
Since $\sum_e\pi(e)=1$, the assumption in the statement gives a
$(\pi,\eta)$-balanced separator $R\subseteq W$ with $\rho_J^*(R)\le r$.
Therefore also $\rho_H^*(R)\le r$.

Set $B=Z\cup S\cup R$. Subadditivity of $\rho^*$ gives us
\[
 \rho_H^*(B)
 \le \rho_H^*(Z)+\rho_H^*(S)+\rho_H^*(R)
 \le \Lambda+2r.
\]

We move on to constructing our tree decomposition by separating $J$ by $B$.
For each component $K$ of $J-B$, define its boundary and child vertex set as
\[
 Z_K
 :=B\cap N_{P(J)}(K),
 \qquad
 W_K:=K\cup Z_K.
\]
For our application of the induction hypothesis we will want to bound $\rho_H^*(Z_K)$ for each $K$ by $\Lambda$.

If $Z=\emptyset$, then
$B=R$ and $Z_K\subseteq R$ and therefore directly
\(
 \rho_H^*(Z_K)\le r\le\Lambda
\).

If instead $Z\ne\emptyset$, let $C$ be the component of $J-S$
containing $K$. We have $K\subseteq C$, but equality need not hold because
vertices of $Z\cup R$ can join several components of $J-B$ in $J-S$.
Moreover, we have
\(
 Z_K\subseteq S\cup R\cup(Z\cap C).
\)
Indeed, if $z\in Z_K\setminus(S\cup R)$, then
$z\in B\setminus(S\cup R)\subseteq Z$, and adjacency to $K\subseteq C$
in $J-S$ implies $z\in C$.

Restricting $\gamma$ to edges meeting $C$ still covers $Z\cap C$, since
every edge containing a vertex of $Z\cap C$ meets $C$.
The $(\gamma,\eta)$-balance of $S$ therefore gives
\[
 \rho_H^*(Z\cap C)
 \le\sum_{e\cap C\ne\emptyset}\gamma(e)
 \le \eta\sum_{e\in E(J)}\gamma(e)
 =\eta\rho_H^*(Z)
 \le\eta\Lambda.
\]
Consequently,
\[
 \rho_H^*(Z_K)
 \le \rho_H^*(S)+\rho_H^*(R)+\rho_H^*(Z\cap C)
 \le 2r+\eta\Lambda
 =\Lambda,
\]
where the last equality follows from $\Lambda=2r/(1-\eta)$.

Let $D$ be the component of $J-R$ containing $K$. Every $v\in K$ belongs
to $W\setminus Z$ and has $e_v\cap D\ne\emptyset$. Thus
\[
 \frac{|K|}{|W\setminus Z|}
 \le\sum_{e\cap D\ne\emptyset}\pi(e)
 \le\eta<1.
\]
Hence $|W_K\setminus Z_K|=|K|<|W\setminus Z|$. Since
$\rho_H^*(Z_K)\le\Lambda$, the induction hypothesis applied to
$(W_K,Z_K)$ gives a tree decomposition $(T_K,B^K)$ of $H[W_K]$ containing
an attachment node $u_K$ whose bag includes $Z_K$. Every bag of this decomposition
has fractional cover cost at most $\Lambda+2r$.

Take disjoint copies of the trees $T_K$. Add a new node $t$ with bag $B$ and
join $t$ to the attachment node $u_K$ of $T_K$. The resulting graph is
a tree. Every
vertex of $W\setminus B$ belongs to exactly one component $K$ and hence to
exactly one child instance. Every vertex of $B$ occurs at $t$. If it also
occurs in the child for $K$, then it belongs to
$B\cap W_K=Z_K$ and hence occurs at the attachment node. It is straightforward to verify that this newly constructed decomposition satisfies the connectedness condition.

Let $e\in E(J)$. If $e\subseteq B$,
then the new bag covers $e$. Otherwise, the vertices of $e\setminus B$ lie
in one component $K$ of $J-B$, since a hyperedge is a clique in $P(J)$.
Every vertex of $e\cap B$ lies in $Z_K$, so $e\subseteq W_K$. Hence $e$ is
a hyperedge of $H[W_K]$ and is covered by the corresponding child
decomposition. This proves the induction claim.

Apply the induction with $W=V(H)$ and $Z=\emptyset$. The resulting width is
at most
\[
 \Lambda+2r
 =\frac{2(2-\eta)}{1-\eta}r.
\]
\end{proof}

\mainwidththeorem*

\begin{proof}[Proof of \Cref{thm:intro-width}]
Set $a=\max\{2,\adw(H)\}$. Fix $W\subseteq V(H)$, put $J=H[W]$, and fix a
nonzero weighting $\gamma:E(J)\to[0,1]$. Let $x$ be a fractional
$(\gamma,1/2)$-balanced separator satisfying
\(
 \rho_J^*(x)=\beta_{J,1/2}(\gamma).
\)
Such an $x$ exists by \Cref{lem:fractional-separator-convexity} and
continuity of $\rho_J^*$. By the definition of $\fsep_{1/2}$ and
\Cref{thm:fsep-adw},
\[
 \rho_J^*(x)
 =\beta_{J,1/2}(\gamma)
 \le\fsep_{1/2}(H)
 \le\adw(H)
 \le a.
\]
Since $P(J)=P(H)[W]$, every independent set of $P(J)$ is independent in
$P(H)$, and hence $\alpha(J)\le\alpha(H)$. Moreover, $I(J)$ is isomorphic to
a subgraph of $I(H)$. To see this, choose for each $f\in E(J)$ an edge
$e_f\in E(H)$ with $f=e_f\cap W$. Distinct edges of $J$ have distinct chosen
preimages, and the map fixing the vertices in $W$ and sending $f$ to $e_f$
preserves every incidence. Thus $\mu(J)\le\mu(H)$. Together with the bound on
$\alpha(J)$, this gives
\[
 \lambda(J)
 =\min\{\mu(J),\max\{1,\log\alpha(J)\}\}
 \le\min\{\mu(H),\max\{1,\log\alpha(H)\}\}
 =\lambda(H).
\]
As $a\ge2$, \Cref{thm:rounding} tells us that
\(
 \rho_J^*(S)\le C_0\lambda(H)
 a\log a
\)
for some $(\gamma,5/6)$-balanced separator $S$.
The choices of $W$ and $\gamma$ were arbitrary, so
\Cref{lem:recursive-decomposition} applies with $\eta=5/6$ and
$r=C_0\lambda(H)a\log a$. Therefore
\[
 \fhw(H)\le14r
 \le C\lambda(H)a\log a.
\]
For $\adw(H)\ge2$ this is the stated bound,
and for $\adw(H)<2$ it gives $\fhw(H)=O(\lambda(H))$.
\end{proof}

Since $\alpha(H)\le |V(H)|$, and hence $\lambda(H) \le \log |V(H)|$, we also obtain the following surprising corollary.

\begin{corollary}
\label{cor:stoc-universal-bridge}
For every $n$-vertex hypergraph $H$ with $\adw(H)\ge2$,
\[
 \fhw(H)\le C\adw(H)\log\adw(H)\log n
\]
for a universal constant $C$.
\end{corollary}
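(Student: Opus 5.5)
The corollary follows directly from \Cref{thm:intro-width} once we bound $\lambda(H)$ in terms of $n=|V(H)|$. We apply \Cref{thm:intro-width} to $H$. Since $\adw(H)\ge2$, it gives
\[
 \fhw(H)\le C_1\lambda(H)\adw(H)\log\adw(H),
\]
where $C_1$ is the constant from that theorem. It then suffices to show $\lambda(H)\le c\log n$ for a universal $c$, and to set $C=cC_1$.

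For this bound we use only the second term in the definition \eqref{eq:lambda-intro}, namely $\lambda(H)\le\max\{1,\log\alpha(H)\}$. An independent set of $P(H)$ is a set of vertices, so $\alpha(H)\le n$, and therefore $\lambda(H)\le\max\{1,\log n\}$.

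The only care needed is with small $n$. The hypothesis $\adw(H)\ge2$ forces $n\ge2$. Indeed, if $n=1$, the single vertex lies in some hyperedge, so any edge-dominated modular weighting gives it weight at most $1$. The one-bag decomposition then has $b$-cost at most $1$, so $\adw(H)\le1$. Hence $n\ge2$, which gives $\log n\ge1$ and $\max\{1,\log n\}=\log n$. We may therefore take $c=1$ and $C=C_1$.

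The argument is routine. The only potential obstacle is the degenerate case $n=1$, where $\log n=0$ would otherwise break the inequality, and the observation above shows that this case cannot occur under the hypothesis $\adw(H)\ge2$.
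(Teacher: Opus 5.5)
Your proposal is correct and matches the paper's argument: apply \Cref{thm:intro-width} and bound $\lambda(H)\le\max\{1,\log\alpha(H)\}\le\log n$ using $\alpha(H)\le n$. Your check that $\adw(H)\ge2$ rules out $n=1$, so that $\max\{1,\log n\}=\log n$, supplies a detail the paper leaves implicit when it writes $\lambda(H)\le\log|V(H)|$.
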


\section{Complexity Consequences}
\label{sec:consequences}

We now translate \Cref{thm:intro-width} into complexity statements for
homomorphism problems. For hypergraph classes, fixed-parameter tractability
is characterised by bounded submodular width, while the standard
polynomial-time algorithms require bounded fractional hypertree width.
These results do not combine directly, since $\subw\le\fhw$ and the reverse
inequality fails in general. Our width comparison supplies the missing
reverse bound under control of $\lambda$. It yields an $\FPT=\PTIME$
collapse on bounded-$\lambda$ classes and a quasipolynomial algorithm for
every $\FPT$ hypergraph class. We then treat structure classes of bounded
incidence degeneracy, where the relevant widths are those of the cores but
computing the core itself is intractable.

\subsection{Hypergraph Classes}

First, recall that, given an FHD $\mathcal D$ of width $k$ for $H_A$, deciding $A\to B$
takes time
$\operatorname{poly}(\lVert A\rVert+|\mathcal D|+\lVert B\rVert)\,\lVert B\rVert^{O(k)}$~\cite{fhw}. Computing an optimum FHD is $\NP$-hard even when the width is bounded by a fixed constant~\cite{DBLP:journals/jacm/GottlobLPR21}. However, Marx's cubic approximation computes an FHD of width $O(k^3)$ in time $\lVert H_A\rVert^{O(k^3)}$ whenever $\fhw(H_A)\le k$~\cite[Theorem~4.1]{MarxFHW2010}. At the cost of a cubic overhead in the exponent, one still obtains a polynomial-time algorithm for every fixed $k$.

On classes of bounded $\lambda$, \Cref{thm:intro-width}
and the width hierarchy make bounded fractional hypertree width equivalent
to bounded submodular width, which characterises fixed-parameter
tractability under ETH~\cite{Marx13}. \Cref{thm:intro-width} directly yields the $\FPT=\PTIME$ collapse in this setting.

\begin{theorem}
\label{thm:bounded-lambda-collapse}
Let $\cH$ be a recursively enumerable hypergraph class with bounded
$\lambda$.  Assuming ETH, the following are equivalent.
\begin{enumerate}[label=(\roman*)]
 \item $\cH$ has bounded fractional hypertree width.
 \item $\cH$ has bounded submodular width.
 \item $\pHom(\cH)\in\FPT$.
 \item $\Hom(\cH)\in\PTIME$.
\end{enumerate}
\end{theorem}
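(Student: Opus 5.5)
We prove the equivalences by closing the cycle (i)$\Rightarrow$(iv)$\Rightarrow$(iii)$\Rightarrow$(ii)$\Rightarrow$(i). Only the step (iii)$\Rightarrow$(ii) uses ETH, and only (ii)$\Rightarrow$(i) uses bounded $\lambda$.

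\textbf{(i)$\Rightarrow$(iv).} Suppose $\fhw(H)\le k$ for all $H\in\cH$.
\begin{enumerate}
\item Given $(A,B)$, run Marx's approximation~\cite[Theorem~4.1]{MarxFHW2010} on $H_A$. This yields an FHD of width $O(k^3)$ in time $\lVert H_A\rVert^{O(k^3)}$.
\item Apply the FHD-based algorithm of~\cite{fhw}. It decides $A\to B$ in time $\operatorname{poly}(\lVert A\rVert+|\mathcal D|+\lVert B\rVert)\,\lVert B\rVert^{O(k^3)}$.
\item Since $k$ is a constant for the class, the total running time is polynomial.
\end{enumerate}
Membership $H_A\in\cH$ is promised by the problem, so no recognition step is needed.

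\textbf{(iv)$\Rightarrow$(iii).} This is immediate, since every polynomial-time algorithm is in particular an $\FPT$ algorithm for the parameterisation by $A$.

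\textbf{(iii)$\Rightarrow$(ii).} This is Marx's theorem~\cite{Marx13}. For recursively enumerable hypergraph classes of unbounded submodular width, $\pHom(\cH)$ is not in $\FPT$ under ETH. We invoke the contrapositive. The recursive enumerability hypothesis is exactly what Marx's lower bound requires.

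\textbf{(ii)$\Rightarrow$(i).} This step is the heart of the argument.
\begin{enumerate}
\item Let $\subw(H)\le s$ and $\lambda(H)\le L$ for all $H\in\cH$.
\item By the hierarchy $\adw\le\subw$, we get $\adw(H)\le s$.
\item By \Cref{thm:intro-width}, $\fhw(H)\le C L\max\{2,s\}\log\max\{2,s\}$. This single expression handles both the case $\adw(H)\ge2$ and the case $\adw(H)<2$, where the bound is $C\lambda(H)$.
\item This bound is a constant depending only on the class, so $\cH$ has bounded $\fhw$.
\end{enumerate}
Having closed the cycle, the four conditions are equivalent.

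The only potential obstacle is bookkeeping rather than mathematics. We must check that the cited results apply in exactly the formulation used here: Marx's characterisation is stated for classes of hypergraphs with the pattern as input, and the algorithm of~\cite{fhw} accepts an FHD that is only approximately optimal. Both are standard. The substantive work is already done in \Cref{thm:intro-width}.
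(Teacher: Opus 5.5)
Your proof is correct and follows essentially the same route as the paper: $\adw\le\subw$ combined with \Cref{thm:intro-width} for (ii)$\Rightarrow$(i), Marx's ETH lower bound for (iii)$\Rightarrow$(ii), and polynomial-time evaluation of bounded-$\fhw$ patterns for (i)$\Rightarrow$(iv). The differences are only presentational. You close a single cycle instead of proving two biconditionals, and you compose Marx's cubic approximation with the FHD-based algorithm explicitly rather than citing Grohe and Marx's Corollary~4.11; the paper also sketches that composition just before the theorem.
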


\begin{proof}
The width hierarchy $\adw(H)\le\subw(H)\le\fhw(H)$ together with
\Cref{thm:intro-width} gives (i)$\Leftrightarrow$(ii).
Marx~\cite{Marx13} showed that
(ii)$\Leftrightarrow$(iii) under ETH.
For (i)$\Rightarrow$(iv), Grohe and Marx prove directly that
$\Hom(\cH)\in\PTIME$~\cite[Corollary~4.11]{fhw}.
Every polynomial-time algorithm is fixed-parameter tractable, giving
(iv)$\Rightarrow$(iii).
\end{proof}

This also proves \Cref{thm:intro-collapse}.

\begin{proof}[Proof of \Cref{thm:intro-hypergraph-fpt-to-qp}]
Assume ETH and $\pHom(\cH)\in\FPT$. By~\cite{Marx13}, $\cH$ has bounded submodular
width, and hence bounded adaptive width. Let $c$ be some constant bound on the adaptive width of $\cH$.

 On input $(A,B)$ with
$H_A\in\cH$, let $N=\lVert A\rVert+\lVert B\rVert$.
By \Cref{thm:intro-width} and $\lambda(H_A)=O(\log N)$,
we have $\fhw(H_A) \le c'\log N$ where $c'$ depends on $c$.
The approximation algorithm from~\cite[Theorem~4.1]{MarxFHW2010} constructs an FHD of
width $O(k^3)$ in time $\lVert H_A\rVert^{O(k^3)}$ whenever
    $\fhw(H_A)\le k$. Taking $k=c'\log N$ and evaluating $(A,B)$
using the resulting decomposition gives total running time
$N^{O(\log^3 N)}$.
\end{proof}

\subsection{Widths of Cores and Structure Classes}
In the following, for a relational structure $A$ and a hypergraph width function $w$,
we write $w(A)$ for $w(H_A)$.

An endomorphism of $A$ is a homomorphism from $A$ to itself.
A \emph{core} of $A$ is an endomorphic image of minimum domain
cardinality. It is unique up to isomorphism, and we denote it by
$\operatorname{core}(A)$ (see also~\cite{ChandraMerlin77}).
Since $A$ and $\operatorname{core}(A)$ are homomorphically equivalent,
\[
 A\to B \quad\Longleftrightarrow\quad \operatorname{core}(A)\to B.
\]
Their widths can nevertheless differ arbitrarily. For example, the
undirected $n\times n$ grid with $n\ge2$ has a single edge as its core, even though its
fractional hypertree width grows with $n$.

Observe that in the $\Hom(\cH)$ setting this has no effect, as every hypergraph $H$ admits a structure $A$ with $H_A=H$ where $A$ is a core. This motivates an alternative version of the problem where we restrict the patterns to a class $\cA$ of finite relational structures.

\begin{problem}{$\Hom(\cA)$}
Input & Finite relational structures $A$ and $B$ over the same signature,
with $A\in\cA$.\\
Question & Does $A\to B$?
\end{problem}

We write $\pHom(\cA)$ for the parameterisation of $\Hom(\cA)$ by the pattern $A$ and write $\operatorname{core}(\cA)$ for the class of cores of structures
in $\cA$.
Grohe~\cite{DBLP:journals/jacm/Grohe07} showed that,
for any recursively enumerable class $\cA$ of structures of bounded arity, bounded treewidth
of $\operatorname{core}(\cA)$ characterises both polynomial-time solvability and
fixed-parameter tractability, assuming $\FPT\ne\mathsf{W}[1]$.
We extend this characterisation to bounded incidence degeneracy using
fractional hypertree width of the cores.

The main challenge over the hypergraph setting is the upper bound. Recall that the efficient algorithm for deciding $A\to B$ with exponent $O(k)$ requires an explicit width-$k$ decomposition for $A$. Even in the $\Hom(\cH)$ setting, obtaining such a decomposition in polynomial time requires an approximation algorithm. In the $\Hom(\cA)$ setting, this is not feasible as finding the core itself is already $\NP$-hard~\cite{ChandraMerlin77}.

To avoid this issue, Dalmau, Kolaitis, and Vardi~\cite{DalmauKolaitisVardi2002}
showed how to use existential pebble games to decide $A\to B$ in polynomial time
when the cores have bounded treewidth, without having to compute the cores.
Chen and Dalmau~\cite{ChenDalmau2005} extend this approach to bounded
$\ghw$ of the cores. To apply it to fractional
hypertree width, we need to bound the number of edges needed to cover
each bag, rather than just their total fractional weight.
Chen et al.~\cite[Section~4]{ChenGottlobLanzingerPichler2020} use
bounds on the integrality gap of edge covers to show that, under bounded
VC dimension, bounded $\fhw$ implies bounded $\ghw$.
This transfers the upper bound without constructing a decomposition of
the core. Bounded incidence degeneracy ensures bounded VC dimension,
giving the following proposition.

\begin{proposition}
\label{prop:sparse-core-fhw-ptime}
Let $\cA$ be a class of finite relational structures of bounded incidence
degeneracy. If
$\operatorname{core}(\cA)$ has bounded fractional hypertree width, then
$\Hom(\cA)\in\PTIME$.
\end{proposition}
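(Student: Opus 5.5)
The plan is to reduce to the Chen--Dalmau characterisation: if the cores of $\cA$ have bounded generalised hypertree width, then $\Hom(\cA)\in\PTIME$, with no need to compute the core~\cite{ChenDalmau2005}. Their algorithm runs an existential pebble-type consistency procedure whose parameter depends only on the $\ghw$ bound. So it suffices to show that $\ghw(\operatorname{core}(A))$ is bounded uniformly over $A\in\cA$. We are given a bound $k$ on $\fhw(\operatorname{core}(A))$, and we must convert it into a $\ghw$ bound.

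First, the incidence degeneracy bound has to pass to the cores. The core $\operatorname{core}(A)$ is isomorphic to an induced substructure of $A$: it is the image of a retraction, and every tuple of the core is a tuple of $A$ lying inside the image. Hence $H_{\operatorname{core}(A)}$ is a subhypergraph of $H_A$, obtained by keeping some vertices and some hyperedges entirely contained in them. So $I(H_{\operatorname{core}(A)})$ is a subgraph of $I(H_A)$, and $\mu$ cannot increase, since degeneracy is monotone under subgraphs. Thus $\operatorname{core}(\cA)$ has incidence degeneracy at most some $d$.

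Second, I would show that bounded incidence degeneracy $d$ bounds the VC dimension of the hypergraph. Suppose a vertex set $X$ with $|X|=m$ is shattered by $E(H)$. Then for each $Y\subseteq X$ with $|Y|=2$ there is an edge $e_Y$ with $e_Y\cap X=Y$. These $\binom m2$ edges are distinct. The subgraph of $I(H)$ on $X\cup\{e_Y\}$, restricted to the incidences between $X$ and the $e_Y$, has every $e_Y$ of degree $2$ and every $x\in X$ of degree $m-1$. Using pairs gives minimum degree $\min(2,m-1)$, which is useless, so instead I would use subsets of size $t=d+1$. For $m\ge 2t$, each $x\in X$ lies in $\binom{m-1}{t-1}\ge t$ of them, so the minimum degree is at least $t>d$, a contradiction. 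Hence the VC dimension is less than $2(d+1)$.

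Third, I would invoke the integrality-gap result of Chen et al.~\cite[Section~4]{ChenGottlobLanzingerPichler2020}. For hypergraphs of VC dimension at most $v$, any set with $\rho^*(X)\le k$ has $\rho(X)\le f(v,k)$, via the $\varepsilon$-net bound $O(vk\log k)$. Taking an FHD of $\operatorname{core}(A)$ of width at most $k$, every bag then has an integral cover of size $f(v,k)$, so $\ghw(\operatorname{core}(A))\le f(v,k)$. Applying Chen--Dalmau then gives $\Hom(\cA)\in\PTIME$.

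The main obstacle I expect is the interface with Chen--Dalmau. Their result is phrased for bounded arity in some places, and we need the version where the consistency check uses $\ghw$-style pebble games with edge-sized moves that are polynomial for fixed width even under unbounded arity. I would check that their theorem is stated for arbitrary signatures with the structures given explicitly. The VC-dimension step and the passage of $\mu$ to cores are routine.
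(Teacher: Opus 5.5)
Your proposal is correct and matches the paper's proof, which cites Chen et al.\ for exactly the chain you spell out (an $\varepsilon$-net integrality-gap bound turning bounded $\fhw$ of the cores into bounded $\ghw$, followed by Chen--Dalmau), and which bounds the VC dimension via $K_{d+1,d+1}$-freeness of $I(H_A)$ and Bringmann et al.\ rather than through your direct degeneracy count. One small precision: the $\varepsilon$-net bound for edge covers depends on the VC dimension of the dual hypergraph, not of $H$ itself. Since incidence degeneracy is symmetric in vertices and edges, your counting argument applied with the roles of vertices and edges swapped bounds the dual VC dimension as well.
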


\begin{proof}
By \cite[Lemma~2 and Section~4, Corollary~1]{ChenGottlobLanzingerPichler2020},
bounded VC dimension of $\{H_A:A\in\cA\}$ and bounded fractional
hypertree width of $\operatorname{core}(\cA)$ imply $\Hom(\cA)\in\PTIME$.
Let an integer $d$ bound the incidence degeneracy. The incidence graphs
contain no $K_{d+1,d+1}$, so
$\operatorname{VCdim}(H_A)\le d+\lceil\log_2(d+1)\rceil$ for every
$A\in\cA$ by \cite[Table~1 and Appendix~B]{BringmannKozmaMoranNarayanaswamy2016}.
\end{proof}

Importantly, the proposition requires bounded incidence degeneracy, and we do not know whether it holds for bounded $\lambda$ in general.

This remains an interesting gap between the following \Cref{thm:semantic-incidence-collapse} and \Cref{thm:bounded-lambda-collapse}.

\begin{theorem}
\label{thm:semantic-incidence-collapse}
Let $\cA$ be a recursively enumerable class of finite relational structures
with bounded incidence degeneracy. Assuming ETH, the following are equivalent.
\begin{enumerate}[label=(\roman*)]
 \item $\operatorname{core}(\cA)$ has bounded fractional hypertree width.
 \item $\operatorname{core}(\cA)$ has bounded submodular width.
 \item $\pHom(\cA)\in\FPT$.
 \item $\Hom(\cA)\in\PTIME$.
\end{enumerate}
\end{theorem}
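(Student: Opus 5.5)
The plan is to prove the cycle (i)$\Rightarrow$(iv)$\Rightarrow$(iii)$\Rightarrow$(ii)$\Rightarrow$(i), following the template of \Cref{thm:bounded-lambda-collapse} but working with cores throughout. Incidence degeneracy does not increase when passing to cores. A core of $A$ is isomorphic to an induced substructure of $A$: take a minimal endomorphic image, which is an induced substructure on the image set. Its hypergraph therefore has an incidence graph that embeds as a subgraph of $I(H_A)$, by the same edge-preimage argument used for $\mu(J)\le\mu(H)$ in the proof of \Cref{thm:intro-width}. Hence $\operatorname{core}(\cA)$ has bounded $\mu$, so it has bounded $\lambda$.

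The steps in order are as follows.

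\textbf{(i)$\Rightarrow$(iv).} This is exactly \Cref{prop:sparse-core-fhw-ptime}.

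\textbf{(iv)$\Rightarrow$(iii).} This is trivial.

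\textbf{(ii)$\Leftrightarrow$(i).} This follows from $\subw\le\fhw$ together with \Cref{thm:intro-width} applied to the cores. Since the cores have bounded $\lambda$, bounded $\subw$ gives bounded $\adw$, and hence bounded $\fhw$.

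\textbf{(iii)$\Rightarrow$(ii).} This is the step needing care and is the main obstacle. Marx's lower bound~\cite{Marx13} is stated for hypergraph classes $\Hom(\cH)$, not for structure classes with a fixed pattern class $\cA$. I would reduce as follows. Suppose $\operatorname{core}(\cA)$ has unbounded submodular width, and let $\cH'=\{H_{\operatorname{core}(A)}:A\in\cA\}$. This class is recursively enumerable, because cores of an r.e.\ class can be enumerated by brute force. By Marx, $\pHom(\cH')$ is not in FPT under ETH. The task is then to reduce $\pHom(\cH')$ to $\pHom(\cA)$.

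Given an instance $(A',B')$ with $H_{A'}=H_{\operatorname{core}(A)}$, the first move is to enumerate $A\in\cA$ and compute $\operatorname{core}(A)$; this costs only a function of the parameter. The difficulty is that $A'$ is an arbitrary structure on that hypergraph, not $\operatorname{core}(A)$ itself.

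To handle this, I would rely on the standard device from Grohe's proof. Marx's hardness for a hypergraph $H$ is in fact established via structures whose relations are arbitrary, and one may use individualisation. Concretely, build $B^*$ from $B'$ as a product with $\operatorname{core}(A)$: domain pairs $(a,b)$, with relations of $\operatorname{core}(A)$ lifted. A homomorphism $A\to B^*$ then factors through $\operatorname{core}(A)$. The core property, that every endomorphism of the core is an automorphism, lets one recover a homomorphism that is ``identity-like'' on the first coordinate, after composing with a suitable automorphism power. I would cite Grohe's lemma for this (Grohe~\cite{DBLP:journals/jacm/Grohe07}, the core-product construction) and check that it does not use bounded arity. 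Its size is $|B'|\cdot f(|A|)$, polynomial in the host for a fixed parameter.

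Marx's reduction then encodes the hard instances as relations on $H_{\operatorname{core}(A)}$. These can be intersected into the product host, since the relations of $B^*$ indexed by tuples of $\operatorname{core}(A)$ can be chosen freely per hyperedge.

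The point I expect to be delicate is whether Marx's ETH lower bound holds with relations attached to a fixed labelled core rather than arbitrary signatures. If it does not transfer cleanly, I would fall back on arguing that Marx's hardness proof only needs, for each hyperedge, an arbitrary relation over its vertex set. Each tuple of the core gives a separate relation symbol instance in the product, which suffices up to merging tuples inducing the same hyperedge. That merging is handled by intersecting the corresponding relations.
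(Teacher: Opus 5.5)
Your proof is correct. Apart from one step it matches the paper's proof: the transfer of incidence degeneracy to cores (taking $C=\operatorname{core}(A)$ as a substructure of $A$), (i)$\Leftrightarrow$(ii) from the width hierarchy and \Cref{thm:intro-width}, (i)$\Rightarrow$(iv) from \Cref{prop:sparse-core-fhw-ptime}, and (iv)$\Rightarrow$(iii) are the same. The difference is (iii)$\Rightarrow$(ii). The paper cites Chen, Gottlob, Lanzinger and Pichler (Theorem~1 and Lemma~2), which already give, for arbitrary arity and under ETH, that $\pHom(\cA)\in\FPT$ iff $\operatorname{core}(\cA)$ has bounded submodular width. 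You instead derive the hardness half from Marx's hypergraph theorem for $\cH'=\{H_{\operatorname{core}(A)}:A\in\cA\}$ plus Grohe's core-product reduction. This makes the argument self-contained and shows that only the hardness half of the cited equivalence is needed, since your cycle through (i) and (iv) yields the algorithmic half. The citation is shorter.

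The point you flag as delicate is not actually a problem, because your construction already encodes an arbitrary constraint per hyperedge, which is exactly what Marx's theorem ranges over. For each $R$ and each $\bar c=(c_1,\dots,c_r)\in R^{C}$, let $R^{B^*}$ contain exactly the tuples $((c_1,d_1),\dots,(c_r,d_r))$ for which $c_i\mapsto d_i$ is a well-defined assignment satisfying the constraint on $\{c_1,\dots,c_r\}$ derived from $(A',B')$. Every hyperedge of $H_C$ is the entry set of some tuple of $C$, so every constraint is enforced, and the signature is that of $A$. For the converse, take $g\colon A\to B^*$. Then $\pi_1\circ g$ restricted to $C\subseteq A$ is an endomorphism, hence an automorphism $\sigma$, of $C$. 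The map $g|_C\circ\sigma^{-1}$ is the identity on first coordinates, so its second coordinates form a solution. Your sentence that the homomorphism ``factors through'' the core should be replaced by this.

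Unbounded arity does require care about size. Build these relations directly, which gives at most $|R^C|\cdot\lVert B'\rVert$ tuples per symbol. Do not first lift all tuples of $C$ over $V(B')$ and then intersect: that produces $|V(B')|^{r}$ tuples with $r$ growing with the parameter, and is not an fpt-reduction.
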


\begin{proof}
For $A\in\cA$, set $C=\operatorname{core}(A)$ and let $d$ bound the incidence
degeneracy of $\cA$. We may choose $C$ as a substructure of $A$.
Thus $I(H_{C})$ is a subgraph of $I(H_A)$, and
$\lambda(H_{C})\le\mu(H_{C})\le d$.
For $\subw(C)\ge2$, the width hierarchy and \Cref{thm:intro-width} give
\[
 \subw(C)\le\fhw(C)
 \le O_d\!\left(
 \subw(C)\log\subw(C)
 \right).
\]
For $\subw(C)<2$, the same theorem gives $\fhw(C)=O_d(1)$.
This proves (i)$\Leftrightarrow$(ii). By Chen et al.~\cite[Theorem~1 and Lemma~2]{ChenGottlobLanzingerPichler2020},
$\pHom(\cA)\in\FPT$ if and only if $\operatorname{core}(\cA)$ has bounded
submodular width under ETH, giving (ii)$\Leftrightarrow$(iii).
\Cref{prop:sparse-core-fhw-ptime} gives (i)$\Rightarrow$(iv).
Every polynomial-time algorithm is fixed-parameter tractable, giving
(iv)$\Rightarrow$(iii).
\end{proof}

Bounded incidence degeneracy includes both bounded rank and bounded
vertex degree, and also allows both to be unbounded when the incidence
graphs are sparse. The following corollary gives several standard
settings in which the characterisation applies.

\begin{corollary}
\label{cor:sparse-incidence-classes}
Let $\cA$ be a recursively enumerable class of finite relational structures.
Each of the following conditions implies the equivalences in
\Cref{thm:semantic-incidence-collapse}, assuming ETH.
\begin{enumerate}[label=(\roman*)]
 \item The hypergraphs $H_A$ have bounded rank.
 \item The hypergraphs $H_A$ have bounded maximum vertex degree.
 \item The hypergraphs $H_A$ are Zykov-planar, i.e., the
 incidence graphs $I(H_A)$ are planar (see~\cite{Zykov1974}).
 \item The incidence graphs $I(H_A)$ exclude a fixed graph as a topological minor.
\end{enumerate}
\end{corollary}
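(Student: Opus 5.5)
The corollary follows once we show that each of the four conditions gives an absolute bound on the incidence degeneracy $\mu(H_A)$ over all $A\in\cA$. Then \Cref{thm:semantic-incidence-collapse} applies directly. Throughout we use that $I(H_A)$ is bipartite with parts $V(H_A)$ and $E(H_A)$. Degeneracy is the maximum, over nonempty subgraphs $F$, of the minimum degree of $F$, so it suffices to show that every nonempty subgraph of $I(H_A)$ has a vertex of small degree.

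For (i), let $r$ bound the rank. In any nonempty subgraph $F$ of $I(H_A)$ there are two cases. If $F$ contains a hyperedge-vertex $e$, then $\deg_F(e)\le|e|\le r$. Otherwise $F$ consists only of vertex-nodes and has no edges, so every degree is $0$. Hence $\mu\le r$.

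For (ii), the same argument with the roles of the two parts swapped works. If $F$ contains a vertex-node $v$, then $\deg_F(v)\le\Delta(H_A)$. Otherwise all degrees in $F$ are $0$. Hence $\mu\le\Delta$.

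For (iii), planar graphs are $5$-degenerate, since every planar graph has a vertex of degree at most $5$ and planarity is closed under subgraphs. A sharper bound is available: a bipartite planar graph on $n\ge3$ vertices has at most $2n-4$ edges, so it has a vertex of degree at most $3$. Either way $\mu$ is bounded.

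For (iv), suppose the incidence graphs exclude a fixed graph $K$ as a topological minor. Then they exclude $K_t$ as a topological minor for $t=|V(K)|$, because a subdivision of $K_t$ contains a subdivision of $K$. By the classical theorem of Mader, refined by Bollob\'as--Thomason and Koml\'os--Szemer\'edi, graphs with no $K_t$-topological minor have average degree $O(t^2)$. This class is closed under subgraphs, so every nonempty subgraph has a vertex of degree $O(t^2)$, and $\mu=O(t^2)$. Item (iii) is also a special case of (iv) via Kuratowski's theorem.

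The only step that needs an external result is (iv), the degeneracy bound for topological-minor-free graphs. The other three items are elementary.
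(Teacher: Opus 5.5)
Your proposal is correct and follows the paper's proof. In both, each condition is reduced to a bound on incidence degeneracy, after which \Cref{thm:semantic-incidence-collapse} applies: trivial rank and degree bounds give (i) and (ii), Euler's formula for bipartite planar graphs gives (iii), and the Bollob\'as--Thomason $O(t^2)$ average-degree bound for graphs without a $K_t$-subdivision gives (iv). Your case split on whether a subgraph contains a hyperedge node simply spells out the paper's one-line claim that $\mu(H)$ is at most both the rank and the maximum vertex degree.
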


\begin{proof}
For every hypergraph $H$, its incidence degeneracy is at most its rank
and at most its maximum vertex degree. Thus (i) and (ii) bound incidence
degeneracy. In (iii), every incidence subgraph is planar and bipartite.
Euler's formula gives average degree less than $4$ in every nonempty
such subgraph, so the incidence graphs are $3$-degenerate.
For (iv), let $F$ be the excluded topological minor and set $t=|V(F)|$.
Bollob\'as and Thomason~\cite{BollobasThomason1998} proved that, for a
universal constant $c$, every graph of average degree at least $c\,t^2$
contains a subdivision of $K_t$, and hence of $F$.
Every nonempty subgraph of an incidence graph therefore has average
degree less than $ct^2$, giving bounded incidence degeneracy.
\end{proof}

\section{Implications for Counting and the Polymatroid Width Hierarchy}
\label{sec:discussion}

We apply the main width theorem to open questions in exact counting
and the polymatroid width hierarchy. The comparison bounds the gap
between the widths used by PANDA and \#PANDA by a multiplicative
$O(\lambda\log\max\{2,\subw\})$ factor. On bounded-$\lambda$ classes,
it follows that fixed-parameter tractability of decision implies
polynomial-time exact counting. We then show that the intermediate
polymatroid widths lie within the same multiplicative band.

The counting question asks whether fixed-parameter tractability of
deciding $A\to B$ implies fixed-parameter tractability of counting all
homomorphisms from $A$ to $B$. For hypergraph restrictions, Marx's
theorem reduces this question to whether bounded submodular width
suffices for exact counting. This remains open in general.

Throughout this section, counting means counting all homomorphisms,
or equivalently all answers to a full conjunctive query. Write
$\mathsf{\#Hom}(\cH)$ for this counting problem on patterns whose
hypergraphs lie in $\cH$, and $p\text{-}\mathsf{\#Hom}(\cH)$ for its
parameterisation by the pattern.

The algorithmic gap appears in PANDA, which evaluates Boolean queries
in time $\widetilde O(N^{\subw(H)})$ for a fixed query hypergraph $H$
and database size $N$~\cite{KhamisNgoSuciu2025PANDA}.
PANDA uses different decompositions for different parts of the data.
The resulting sets of satisfying assignments can overlap, which is
harmless for decision but prevents us from simply summing their counts.
Whether exact counting can achieve the same exponent remains open
\cite[Section~8]{KoutrisEtAl2025FasterJoins}.
The weaker question of whether bounded $\subw$ suffices for
fixed-parameter counting also remains open in general
\cite{KhamisEtAl2019FAQAI,KoutrisEtAl2025FasterJoins,jaguar}.

Abo Khamis et al.~\cite{KhamisEtAl2019FAQAI} introduce
\#PANDA, which ensures that each satisfying assignment contributes
through exactly one decomposition. Its exponent is sharp-submodular
width $\sharpsubw$. This condition is obtained by restricting the
submodularity inequalities available to the algorithm. Let
$\mathcal S_H^\sharp$ consist of the nonnegative, monotone,
edge-dominated functions $b$ with $b(\emptyset)=0$ satisfying
\[
 b(X\cup Y)+b(X\cap Y)\le b(X)+b(Y)
 \quad\text{whenever }X\cap Y\subseteq e
 \text{ for some }e\in E(H).
\]
Then $\sharpsubw(H)=\operatorname{width}_{\mathcal S_H^\sharp}(H)$,
and \#PANDA evaluates scalar aggregates over a commutative semiring
in $\widetilde O(N^{\sharpsubw(H)})$ semiring operations for fixed
$H$~\cite{KhamisEtAl2019FAQAI}.

Requiring submodularity only for intersections contained in an edge
\emph{enlarges} the family of functions over which width is maximised.
Thus the stronger algorithmic requirement for counting is reflected
in a potentially larger width. The known inequalities are
\cite[Proposition~3.15]{KhamisEtAl2019FAQAI}
\[
 \subw(H)\le\sharpsubw(H)\le\fhw(H).
\]
These inequalities alone do not bound $\sharpsubw$ in terms of
$\subw$, since bounded submodular width can coexist with unbounded
fractional hypertree width~\cite{adw,Marx13}.
Our comparison bounds the gap once $\lambda(H)$ is taken into account.

\begin{corollary}
\label{cor:stoc-sharp-submodular-width}
For every hypergraph $H$ with $s=\subw(H)$,
\[
 s\le\sharpsubw(H)\le\fhw(H)
 \le C\lambda(H)s\log\max\{2,s\}
\]
for a universal constant $C$.
\end{corollary}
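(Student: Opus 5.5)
The chain of inequalities splits into three pieces, and only the last needs an argument beyond citation. The first two, $s\le\sharpsubw(H)\le\fhw(H)$, are exactly the known inequalities quoted from \cite[Proposition~3.15]{KhamisEtAl2019FAQAI}. For the left one, note that $\mathcal S_H\subseteq\mathcal S_H^\sharp$: every edge-dominated polymatroid satisfies all submodularity inequalities, in particular those with $X\cap Y$ inside an edge. So the supremum in \eqref{eq:F-width} is taken over a larger family. For the right one, every $b\in\mathcal S_H^\sharp$ satisfies $b(X)\le\rho_H^*(X)$; this is the content of the cited proposition, and I will take it as given. Then a single optimal FHD witnesses the bound for every such $b$ simultaneously.

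For the last inequality I will invoke \Cref{thm:intro-width} together with the width hierarchy $\adw(H)\le\subw(H)=s$, splitting on whether $\adw(H)\ge2$.

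\emph{Case $\adw(H)\ge2$.} \Cref{thm:intro-width} gives $\fhw(H)\le C\lambda(H)\adw(H)\log\adw(H)$. The map $t\mapsto t\log t$ is nondecreasing on $[2,\infty)$, and $\adw(H)\le s$ with $s\ge2$. Hence the bound is at most $C\lambda(H)s\log s=C\lambda(H)s\log\max\{2,s\}$.

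\emph{Case $\adw(H)<2$.} \Cref{thm:intro-width} gives $\fhw(H)\le C\lambda(H)$. To turn this into the stated bound I need $s\log\max\{2,s\}\ge1$, i.e.\ $s\ge1$. This holds because $H$ is nonempty: take $b$ to be the modular weighting placing weight $1$ on a single vertex $v$ and $0$ elsewhere. This $b$ is edge-dominated, and every tree decomposition has a bag containing $v$, so $\adw(H)\ge1$ and hence $s\ge1$. Then $s\log\max\{2,s\}\ge1\cdot1=1$, and the claim follows with the same constant $C$.

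The main (mild) obstacle is this boundary case: the bound must not degenerate when $s$ is small. The lower bound $s\ge1$ for nonempty $H$ handles it. Otherwise the proof is a direct composition of \Cref{thm:intro-width} with the standard hierarchy, keeping the same universal constant $C$.
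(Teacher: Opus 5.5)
Your proposal is correct and takes the same route as the paper. The paper states this corollary without a separate proof, as an immediate consequence of the cited inequalities $\subw(H)\le\sharpsubw(H)\le\fhw(H)$, the hierarchy $\adw(H)\le\subw(H)$, and \Cref{thm:intro-width}. Your handling of the case $\adw(H)<2$, via $s\ge\adw(H)\ge 1$ from a single vertex of weight $1$, correctly fills in a boundary detail the paper leaves implicit.
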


Thus, for $\adw(H)\ge 2$, the ratio between the widths governing \#PANDA and PANDA satisfies
\[
 \frac{\sharpsubw(H)}{\subw(H)}
 \le C\lambda(H)\log \adw(H).
\]
This bounds the exponent needed for exact counting by an
$O(\lambda(H)\log\adw(H))$ factor relative to PANDA's decision exponent.
On bounded-$\lambda$ classes, it also settles the corresponding
tractability question. Bounded submodular width implies polynomial-time
exact counting.

\begin{corollary}
\label{cor:bounded-lambda-counting}
Let $\cH$ be a recursively enumerable hypergraph class of bounded
$\lambda$. Assuming ETH, the following are equivalent.
\begin{enumerate}[label=(\roman*)]
 \item $\cH$ has bounded submodular width.
 \item $\pHom(\cH)\in\FPT$.
 \item $p\text{-}\mathsf{\#Hom}(\cH)\in\FPT$.
 \item $\mathsf{\#Hom}(\cH)\in\PTIME$.
\end{enumerate}
\end{corollary}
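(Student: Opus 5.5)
We prove the equivalences as a cycle (i)$\Rightarrow$(iv)$\Rightarrow$(iii)$\Rightarrow$(ii)$\Rightarrow$(i).

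\textbf{(ii)$\Leftrightarrow$(i).} This is already available: bounded submodular width on a bounded-$\lambda$ class is equivalent to bounded submodular width by \Cref{thm:bounded-lambda-collapse}. We record the quantitative form, since it is used below. By \Cref{cor:stoc-sharp-submodular-width}, if $\subw\le s$ on $\cH$ and $\lambda\le \ell$, then
\[
 \fhw(H)\le C\ell s\log\max\{2,s\}=:k
\]
for every $H\in\cH$, so $\fhw$ is bounded. Conversely, bounded $\fhw$ implies bounded $\subw$ via $\subw\le\fhw$.

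\textbf{(i)$\Rightarrow$(iv).} Given $(A,B)$ with $H_A\in\cH$, we run Marx's approximation~\cite[Theorem~4.1]{MarxFHW2010}. This produces an FHD of $H_A$ of width $O(k^3)$ in polynomial time, since $k$ is a constant. On this FHD we perform the standard counting dynamic programme. For each bag $B_t$ we compute the set of homomorphisms of the bag projection, meaning assignments to $B_t$ consistent with all relations projected onto $B_t$. By the AGM bound these sets have size at most $\lVert B\rVert^{\rho^*(B_t)}$, and they can be enumerated within that time bound (e.g.\ by worst-case optimal join, as in~\cite{fhw}). We then count full assignments bottom-up over the tree: at each node we sum, over agreement on the separator with each child, the product of the children's counts. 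This is the usual join-tree counting, and it is exact because the tree decomposition guarantees that every global homomorphism restricts consistently to the bags in a unique way. One technical point needs care. Each edge must be enforced in some bag that fully contains it, and bag assignments should be filtered by the full relations, not only by projections. To handle this, we attach to each bag the atoms whose edges it contains and filter against them, so that a consistent family of bag assignments corresponds to exactly one homomorphism. The total time is $\operatorname{poly}(\lVert A\rVert)\lVert B\rVert^{O(k^3)}$, which is polynomial.

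\textbf{(iv)$\Rightarrow$(iii) and (iii)$\Rightarrow$(ii).} The implication (iv)$\Rightarrow$(iii) is trivial. For (iii)$\Rightarrow$(ii), we observe that decision reduces to counting, since $A\to B$ holds if and only if the count is nonzero. Hence $p\text{-}\mathsf{\#Hom}(\cH)\in\FPT$ gives $\pHom(\cH)\in\FPT$, and under ETH Marx~\cite{Marx13} yields bounded submodular width. The only step needing real work is the exactness of the decomposition-based counting in (i)$\Rightarrow$(iv), and this is classical: it is the semiring, or FAQ, evaluation over a tree decomposition~\cite{KhamisEtAl2019FAQAI}. Alternatively, it follows from \#PANDA running in $\widetilde O(N^{\sharpsubw})$ with $\sharpsubw\le\fhw\le k$, except that \#PANDA's bound is stated for fixed $H$. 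We therefore prefer the explicit FHD dynamic programme, which is uniform in $H$.
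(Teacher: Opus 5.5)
Your proof is correct and follows the same cycle as the paper. You get (i)$\Leftrightarrow$(ii) from Marx's characterisation as in \Cref{thm:bounded-lambda-collapse}, (i)$\Rightarrow$(iv) from \Cref{thm:intro-width} plus Marx's FHD approximation and counting by dynamic programming on the decomposition, and (iii)$\Rightarrow$(ii) by testing whether the count is positive. The only difference is that you spell out the dynamic programme, correctly enforcing each atom in a bag that contains its edge, where the paper simply cites the known counting algorithms.

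The opening sentence of your (ii)$\Leftrightarrow$(i) paragraph is garbled: it should say that $\pHom(\cH)\in\FPT$ is equivalent to bounded submodular width. This is a slip of wording and does not affect the argument.
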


\begin{proof}
As in \Cref{thm:bounded-lambda-collapse}, (i)$\Leftrightarrow$(ii) under ETH.
For (i)$\Rightarrow$(iv), \Cref{thm:intro-width} gives bounded
fractional hypertree width. A decomposition of bounded width can be
found in polynomial time~\cite{MarxFHW2010}, and dynamic programming
on it counts all homomorphisms in polynomial
time~\cite{PichlerSkritek2013,DurandMengel2015}.
The implication (iv)$\Rightarrow$(iii) is immediate, and
(iii)$\Rightarrow$(ii) follows by testing whether the count is positive.
\end{proof}

On bounded-$\lambda$ classes, the implications
(i)$\Rightarrow$(iv)$\Rightarrow$(iii)$\Rightarrow$(ii) hold unconditionally.
ETH is used only for (ii)$\Rightarrow$(i). The corollary applies to all the
structural classes in \Cref{cor:sparse-incidence-classes}.

The same bound also controls the intermediate polymatroid widths.
Let $\mathcal N_H$ be the edge-dominated normal polymatroids, namely
nonnegative combinations of the functions
$h_S(X)=\mathbf 1_{X\cap S\ne\emptyset}$ for
$\emptyset\ne S\subseteq V(H)$~\cite{KhamisEtAl2020BagContainment}.
Let $\mathcal E_H$ be the edge-dominated entropy functions of finite
discrete random variables, and let $\overline{\mathcal E}_H$ consist
of the edge-dominated limits of entropy functions. We have
\[
 \mathcal M_H\subseteq\mathcal N_H\subseteq\mathcal E_H
 \subseteq\overline{\mathcal E}_H\subseteq\mathcal S_H
 \subseteq\mathcal S_H^\sharp.
\]
Each of these function families defines a width parameter through
$\operatorname{width}_{\mathcal F}(H)$, just as modular and submodular
functions define adaptive and submodular width.
The intermediate cones are not merely formal interpolations. Abo Khamis
et al.~\cite{KhamisEtAl2024LpBounds} use normal, almost-entropic, and
polymatroid optimisation to obtain join-size bounds from $\ell_p$-norms
of degree sequences. These choices expose a trade-off between realisable
worst-case instances, asymptotic tightness, and efficient LP computation.
For simple degree sequences all three bounds coincide, with normal
relations witnessing tightness. Almost-entropic optimisation also gives
the entropic width introduced with PANDA~\cite{pandaexpress}, whose
attainment as a query-evaluation exponent remains open~\cite{jaguar}.
Normal polymatroids also underlie decidable cases of bag query
containment~\cite{KhamisEtAl2020BagContainment} and the computation of
polymatroid bounds under simple degree
constraints~\cite{ImEtAl2025SimpleDegree}.
Our comparison bounds how far the decomposition widths obtained from
these function families can lie apart.

\begin{corollary}
\label{cor:width-hierarchy-band}
For every hypergraph $H$ with $a=\adw(H) \ge 2$,
\begin{equation}
\begin{aligned}
 a
 &\le\operatorname{width}_{\mathcal N_H}(H)
 \le\operatorname{width}_{\mathcal E_H}(H)
 \le\operatorname{width}_{\overline{\mathcal E}_H}(H)\\
 &\le\subw(H)\le\sharpsubw(H)\le\fhw(H)
 \le C\lambda(H)\, a\log a.
\end{aligned}
\label{eq:width-hierarchy-band}
\end{equation}
\end{corollary}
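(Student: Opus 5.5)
The chain splits into two parts: the family inclusions give every inequality up to $\fhw(H)$, and \Cref{thm:intro-width} supplies the final step.

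First, the width functional $\operatorname{width}_{\mathcal F}(H)$ from \eqref{eq:F-width} is a supremum over $b\in\mathcal F$ of a quantity depending only on $b$. It is therefore monotone under inclusion: if $\mathcal F\subseteq\mathcal F'$, then $\operatorname{width}_{\mathcal F}(H)\le\operatorname{width}_{\mathcal F'}(H)$. Applying this to the stated chain
\[
\mathcal M_H\subseteq\mathcal N_H\subseteq\mathcal E_H\subseteq\overline{\mathcal E}_H\subseteq\mathcal S_H\subseteq\mathcal S_H^\sharp
\]
gives every inequality from $a=\adw(H)$ up to $\sharpsubw(H)$, using $\adw=\operatorname{width}_{\mathcal M_H}$, $\subw=\operatorname{width}_{\mathcal S_H}$ and $\sharpsubw=\operatorname{width}_{\mathcal S_H^\sharp}$.

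The inclusions themselves are where the main care is needed, and I would verify each briefly. A modular edge-dominated $b$ with weights $w$ equals $\sum_v w_v h_{\{v\}}$, so it lies in $\mathcal N_H$. Each $h_S$ is the entropy of a suitable variable construction: take independent uniform bits $Z$ and set $X_v=Z$ for $v\in S$ and $X_v$ constant otherwise, so that $H(X_{X'})=\mathbf 1_{X'\cap S\neq\emptyset}$. Nonnegative combinations with positive weights are then approximately entropic, which is the standard fact from the cited literature for integer-scaled combinations. Thus $\mathcal N_H\subseteq\overline{\mathcal E}_H$; if the paper intends $\mathcal N_H\subseteq\mathcal E_H$, I would simply cite \cite{KhamisEtAl2020BagContainment}. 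Entropic functions are polymatroids by Shannon inequalities, and limits of polymatroids remain polymatroids because the defining inequalities are closed conditions. Finally, $\mathcal S_H\subseteq\mathcal S_H^\sharp$ holds trivially, since the sharp family imposes fewer submodularity constraints. I would present these as the inclusions stated just before the corollary, citing rather than reproving.

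Second, $\sharpsubw(H)\le\fhw(H)$ is \cite[Proposition~3.15]{KhamisEtAl2019FAQAI}, already quoted above. Third, $\fhw(H)\le C\lambda(H)\,a\log a$ is exactly \Cref{thm:intro-width}, applicable because $a\ge2$.
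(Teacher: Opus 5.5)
Your proposal is correct and follows the paper's argument: monotonicity of $\operatorname{width}_{\mathcal F}(H)$ under the stated family inclusions, the cited bound $\sharpsubw(H)\le\fhw(H)$, and \Cref{thm:intro-width} for the final step. Your hedge on $\mathcal N_H\subseteq\mathcal E_H$ is unnecessary, since independent finite random variables $Z_S$ with entropy exactly $c_S$ exist (every nonnegative real is the entropy of some finite distribution), and setting $X_v=(Z_S)_{S\ni v}$ realises $\sum_S c_S h_S$ exactly, so the inclusion holds outright and not only in the closure.
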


The proof is once again immediate from the inclusions and \Cref{thm:intro-width}.
Since every term is at least $a$, the entire hierarchy lies within an
$O(\lambda(H)\,\log a)$ multiplicative band. Marx's bound
$\subw(H)=O(a^4)$~\cite{Marx13} already links the boundedness of the
widths through $\subw$ without dependence on the hypergraph. This is already considered surprising, as modular functions can somehow always certify large width attained by submodular functions. Indeed, so far no clear explanation for this phenomenon (beyond the technical argument that yielded the bound) is known.
In the bounded-$\lambda$ setting, our results show that the gap is much smaller than quartic. In general, the bound provides additional insight into which structural factors play a role in this connection between modular and submodular functions.

\section{Related Work}
\label{sec:related-work}
\paragraph{Previous work on $\adw$/$\fhw$ comparisons}
Lanzinger~\cite{LanzingerPODS22} characterises tractability for recursively
enumerable classes of hypergraphs of maximum degree two by bounded
generalised hypertree width, assuming $\FPT\ne\mathsf W[1]$.
Together with Marx's characterisation~\cite{Marx13}, this implies, under
ETH, that bounded fractional hypertree width and bounded submodular width
coincide on such classes. This comparison is qualitative and follows
through the complexity characterisations.

In recent work, Lanzinger~\cite{Lanzinger2026Cuts} uses multicommodity
flows to obtain, for $\adw(H)\ge2$,
\[
 \ghw(H)=O\!\left(\adw(H)\log\adw(H)\right)
\]
on classes of bounded edge excess and on hypergraphs with private
intersections. The constants may depend on the edge-excess bound.
Although the main statements use submodular width, the modular
witnesses in~\cite[Section~6.1]{Lanzinger2026Cuts} give the same lower
bound for adaptive width. We write the equivalent upper bound on
$\ghw$ to make the comparison with \Cref{thm:intro-width} explicit.
Bounded edge excess here means that, in each hyperedge, the
sum of the amounts by which vertex degrees exceed two is bounded by a
fixed constant. A hypergraph has private intersections if every nonempty
intersection of two distinct hyperedges contains a vertex belonging to
exactly those two hyperedges.

\Cref{thm:intro-width} gives an unconditional quantitative comparison
for every bounded-degree class and, more generally, every class of
bounded incidence degeneracy. For $\adw(H)\ge2$,
\[
 \fhw(H)=O\!\left(\Delta(H)\,\adw(H)\log\adw(H)\right).
\]
Thus our result extends the degree-two and bounded-edge-excess
comparisons to every bounded-degree class. The two bounds control
different widths, but
\[
 \fhw(H)\le\ghw(H)\le\Delta(H)\fhw(H).
\]
Indeed, in a fractional cover, each covered vertex belongs to an edge
of weight at least $1/\Delta(H)$. Choosing all such edges gives an
integral cover of at most $\Delta(H)$ times the fractional cost.
Bounded edge excess implies bounded degree, whereas bounded degree
allows arbitrarily many degree-three vertices in one hyperedge.

The private intersection case remains
complementary, since that condition permits unbounded incidence
degeneracy and unbounded $\lambda$. Its degree-two witnesses need not
survive vertex restriction, whereas our separator recursion requires
rounding guarantees on every induced subhypergraph. The flow and
separator arguments therefore apply under different structural
conditions. We discuss this obstruction in
\Cref{sec:stoc-open-problems}.

\paragraph{Connections to Korchemna et al.~\cite{KorchemnaEtAl2024}}
Finally, \Cref{thm:intro-width} relies on important results of Korchemna et al.~\cite{KorchemnaEtAl2024}. They derive these results while developing a novel polynomial-time approximation algorithm for fractional hypertree width. For a width bound
$k\ge2$, their guarantee includes a decomposition of width
$O(\lambda(H)k\log k)$ whenever $\fhw(H)\le k$.
Their proof introduces the fractional balanced-separator relaxation and
the rounding theorem used in \Cref{sec:separators}. Korchemna et al. combine
this rounding theorem with a recursive separator construction to approximate
$\fhw$, using $\fhw(H)$ itself to bound the optimum separator cost. Our
minimax argument instead bounds that cost by $\adw(H)$ uniformly over all
induced subhypergraphs and hyperedge weightings, which turns their rounding
theorem into a direct comparison between the two widths. In particular, the
incidence degeneracy and primal independence terms in $\lambda(H)$ come
from their rounding analysis. We prove
$\fsep_{1/2}(H)\le\adw(H)$ by minimax and the balanced-bag property of
tree decompositions. 

\section{Conclusion}
\label{sec:conclusion}

We proved that every hypergraph $H$ with $\adw(H)\ge2$ satisfies
\[
 \fhw(H)=O\!\left(\lambda(H)\adw(H)\log\adw(H)\right).
\]
Together with the known width and complexity characterisations, this shows
that on recursively enumerable hypergraph classes of bounded $\lambda$,
bounded adaptive width, bounded submodular width, bounded fractional
hypertree width, fixed-parameter tractability of $\pHom$, and polynomial-time
solvability of $\Hom$ are equivalent under ETH. The same characterisation
holds for recursively enumerable classes of structures with bounded
incidence degeneracy after taking cores. The width bound also gives
polynomial-time exact counting on bounded-$\lambda$ classes of bounded
submodular width. Without a bound on $\lambda$, it gives a quasipolynomial-time
algorithm for $\Hom$ whenever $\pHom$ is fixed-parameter tractable, assuming
ETH.

\subsection{Open Problems}
\label{sec:stoc-open-problems}

\paragraph{Unifying the flow and separator proofs.}
Our separator argument and the multicommodity-flow argument
of~\cite{Lanzinger2026Cuts} give closely related comparisons under
incomparable structural hypotheses. For $\adw(H)\ge2$, they yield
\[
 \fhw(H)=O(\lambda(H)\adw(H)\log\adw(H))
\]
and, for hypergraphs with private intersections
\[
 \ghw(H)=O(\adw(H)\log\adw(H)).
\]
The private intersections condition permits unbounded $\lambda$ and is not
preserved by vertex restriction. Indeed, every
hypergraph is an induced subhypergraph of one with private intersections,
obtained by adding, for each pair of intersecting edges, a new vertex
belonging to exactly those two edges.
Thus a rounding guarantee depending only on fractional separator cost
and valid on every induced subhypergraph cannot account for the
private-intersection result. A common proof would instead have to use the
private witnesses globally, or replace recursive restriction by a
decomposition principle compatible with both vertex separators and edge
cuts.

The logarithmic losses also enter at different points. In our proof they
come entirely from rounding a fractional balanced separator of cost $r$
to an integral separator of cost
$O(\lambda(H)r\log r)$. The subsequent separator recursion loses only a
constant factor. In the flow argument the logarithm comes from the
$O(\log m)$ node-capacitated flow--cut gap for product demands on a
well-linked set of size $m$. It is unlikely that either approach can be strengthened to avoid the additional logarithmic factor.

No example is currently known in which either logarithmic loss is
necessary. This leaves open the possibility that
\(
 \fhw(H)=O(\lambda(H)\adw(H))
\)
holds for all hypergraphs and that
\(
 \ghw(H)=O(\adw(H))
\)
holds for hypergraphs with private intersections. Proving either bound
appears to require a more direct structural argument. Conversely, a
matching logarithmic separation would identify a genuine obstruction
shared by the two settings.

\paragraph{Further classes admitting separator rounding.}
The inequality $\fsep_{1/2}(H)\le\adw(H)$ holds for every hypergraph.
A broader problem is to identify structural conditions beyond bounded
$\lambda$ that permit effective rounding of these fractional separators.
If a class admits a nondecreasing
function $g$ and a fixed $\eta\in[1/2,1)$ such that every fractional
$(\gamma,1/2)$-balanced separator $x$ in every induced subhypergraph
$J$ can be rounded to a $(\gamma,\eta)$-balanced separator $S$ with
$\rho_J^*(S)\le g(\rho_J^*(x))$, then the separator recursion from
\Cref{sec:rounding} gives
\[
 \fhw(H)\le\frac{2(2-\eta)}{1-\eta}
       g(\adw(H)).
\]
Thus further rounding theorems would yield new width comparisons and
new classes on which FPT decision implies polynomial-time exact
counting, by the argument of \Cref{cor:bounded-lambda-counting}.

\ifaidisclosure
\paragraph{Generative-AI disclosure.}
OpenAI's ChatGPT 5.6 Sol and Codex systems were used to assist with copy-editing, literature discovery, and mathematical development. In particular, for many lemmas, a  proof sketch was provided to an LLM for completion. The author independently reviewed, simplified and edited the mathematical content and proofs and takes full responsibility for their correctness and originality.
\fi

\bibliographystyle{plainnat}
\bibliography{bibliography/stoc_refs,bibliography/resolved_refs}

\end{document}